\def\SqOrig{A}
\def\SqLoch{\tilde{A}}
\def\SqLochklein{\tilde{a}}
\newcommand{\NP}{NP}
\newcommand{\PSPACE}{PSPACE}
\begin{document}

\title{Online Square Packing%\thanks{Grants or other notes
%about the article that should go on the front page should be
%placed here. General acknowledgments should be placed at the end of the article.}
}
%\subtitle{Do you have a subtitle?\\ If so, write it here}

\titlerunning{Online Square Packing}        % if too long for running head

\author{S\'andor P. Fekete
\and Tom Kamphans\thanks{Tom Kamhans was supported by DFG grant 
FE 407/8-3, project ``ReCoNodes''. A preliminary extended abstract
summarizing the results of this paper appeared in ~\cite{fks-olsp-09p}.} 
\and Nils Schweer}

\authorrunning{S.P.\ Fekete, T.\ Kamphans, and N.\ Schweer} % if too long for running head

\institute{
Braunschweig University of Technology\\
Department of Computer Science, Algorithms Group\\
M\"uhlenpfordtstrasse 23, 38106 Braunschweig, Germany\\
http:www.ibr.cs.tu-bs.de/alg\\
\email{\{s.fekete, t.kamphans, n.schweer\}@tu-bs.de}.
}

%
%\institute{F. Author \at
%              first address \\
%              Tel.: +123-45-678910\\
%              Fax: +123-45-678910\\
%              \email{fauthor@example.com}           %  \\
%%             \emph{Present address:} of F. Author  %  if needed
%           \and
%           S. Author \at
%              second address
%}

\date{}
% The correct dates will be entered by the editor

\maketitle

\begin{abstract}
We analyze the problem of packing squares in an online fashion:
Given a semi-infinite strip of width~1 and an unknown sequence of
squares of side length in $[0,1]$ that arrive from above, one at a
time. The objective is to pack these items as they arrive,
minimizing the resulting height. Just like in the classical game of
Tetris, each square must be moved along a collision-free path to its
final destination. In addition, we account for gravity in both
motion (squares must never move up) and position (any final
destination must be supported from below). A similar problem has
been considered before; the best previous result is by Azar and
Epstein, who gave a 4-competitive algorithm in a setting without
gravity (i.e., with the possibility of letting squares ``hang in the
air'') based on ideas of shelf-packing: Squares are assigned to
different horizontal levels, allowing an analysis that is
reminiscent of some bin-packing arguments. We apply a geometric
analysis to establish a competitive factor of 3.5 for the
bottom-left heuristic and present a $\frac{34}{13} \approx
2.6154$-competitive algorithm.

\keywords{Online packing \and strip packing \and squares \and
gravity \and Tetris.}

\end{abstract}

\section{Introduction}
\label{intro} 

\subsection{Packing Problems}
Packing problems arise in many different situations, either concrete (where
actual physical objects have to be packed), or abstract (where the space is
virtual, e.g., in scheduling). Even in a one-dimensional setting, computing
an optimal set of positions in a container for a known set of objects is a
classical, hard problem. Having to deal with two-dimensional objects adds a
variety of difficulties; one of them is the more complex structure of feasible
placements; see, for example, Fekete et al.~\cite{fsv-eahdop-07}. Another one is actually
moving the objects into their final locations without causing collisions or
overlap along the way.  A different kind of difficulty may arise from a lack of
information: in many settings, objects have to be assigned to their final
locations one by one, without knowing future items. Obviously, this makes the
challenge even harder.  

In this paper, we consider online packing of squares
into a vertical strip of unit width. Squares arrive from above in an online
fashion, one at a time, and have to be moved to their final positions. On this
path, a square may move only through unoccupied space, come to a stop only if
it is supported from below; in allusion to the well-known computer game, this
is called the {\em Tetris constraint}. In addition, an item is not allowed to move
upwards and has to be supported from below when reaching its final position;
these conditions are called {\em gravity constraints}. The objective is to minimize
the total height of the occupied part of the strip. 

\subsection{Problem Statement}
Let $S$ be a semi-infinite strip of width~1 and ${\cal
A}=(\SqOrig_1, \ldots, \SqOrig_n)$ a sequence of squares with side
length $a_i \leq 1$, $i=1,\ldots, n$. The sequence is unknown in ad-
\linebreak vance. A strategy gets the squares one by one and must
place a square before it gets the next. Initially, a square is
located above all previously placed ones.

Our goal is to find a non-overlapping packing of squares in the
strip that keeps the height of the occupied area as low as possible.
More precisely, we want to minimize the distance between the bottom
side of $S$ and the highest point that is occupied by a square. The
sides of the squares in the packing must be parallel to the sides of
the strip. Moreover, a packing must fulfill two additional
constraints:

{\em Tetris constraint}:\index{Tetris constraint} At the time a
square is placed, there is a collision-free path from the initial
position of a square (top of the strip) to the square's final
position.

{\em Gravity constraint}:\index{gravity constraint} A square must be
packed on top of another square
%that has been packed before
({\em i.e.,} the intersection of the upper square's bottom side and
the lower square's top side must be a line segment) or on the bottom
of the strip; in addition, no square may ever move up on the path to
its final position.

\subsection{Related Work}\label{sec:relatedwork_tetris}
\index{strip packing problem}

In general, a packing problem is defined by a set of items that have
to be packed into a container (a set of containers) such that some
objective function, {\em e.g.,} the area where no item is placed or
the number of used containers, is minimized. A huge amount of work
has been done on different kinds of packing problems. A survey on
approximation algorithms for packing problems can be found
in~\cite{s-capsp-02}.

A special kind of packing problem is the {\em strip packing
problem}. It asks for a non-overlapping placement of a set of
rectangles in a semi-infinite strip such that the height of the
occupied area is minimized. The bottom side of a rectangle has to be
parallel to the bottom side of the strip. Over the years, many
different variations of the strip packing problem have been
proposed: online, offline, with or without rotation, and so on.
Typical measures for the evaluation of approximation and online
algorithms are the absolute performance and the asymptotic
performance ratio.

If we restrict all rectangles to be of the same height, the strip
packing problem without rotation is equivalent to the {\em bin
packing problem}:\index{bin packing problem} Given a set of
one-dimensional items each having a size between zero and one, the
task is to pack these items into a minimum number of unit size bins.
Hence, all negative results for the bin packing problem, {\em e.g.,}
\NP-hardness and lower bounds on the competitive ratio also hold for
the strip packing problem; see~\cite{gw-obprs-95} for a survey on
(online) bin packing.

If we restrict all rectangles to be of the same width then the strip
packing problem without rotation is equivalent to the {\em list
scheduling problem}\index{list scheduling}:
%Another problem closely related to strip packing is scheduling:
Given a set of jobs with different processing times, the task is to
schedule these jobs on a set of identical machines such that the
makespan is minimized. This problem was first studied by
Graham~\cite{g-bmta-69}. There are many different kinds of
scheduling problems, {\em e.g.,} the machines can be identical or
not, {\em preemption} might be allowed or not, and there might be
other restrictions such as {\em precedence constraints} or {\em
release times}; see~\cite{b-sa-04} for a textbook on scheduling.

\paragraph{Offline Strip Packing}\index{strip packing problem!offline}
Concerning the absolute approximation factor, Baker {\em et
al.}~\cite{bcr-optd-80} introduce the class of {\em bottom-up
left-justified} algorithms. A specification that sorts the items in
advance is a $3$-approximation for a sequence of rectangles and a
$2$-approximation for a sequence of
squares. %; both analyzes require the elements to be sorted.
Sleator~\cite{s-toapt-80} presents an algorithm with approximation
factor $2.5$, Schiermeyer~\cite{s-rfoap-94} and
Steinberg~\cite{s-spaap-97} present algorithms that achieve an
absolute approximation factor of $2$, for a sequence of rectangles.

Concerning the asymptotic approximation factor, the algorithms
\linebreak presented by Coffman {\em et al.}~\cite{cgjt-pblot-80}
achieve performance bounds of $2$, $1.7$, and $1.5$. Baker {\em et
al.}~\cite{bbk-atdp-81} improve this factor to $1.25$. Kenyon and
R{\'e}mila~\cite{kr-asp-96} design a fully polynomial time
approximation scheme. Han {\em et al.}~\cite{hiyz-spvbp-07} show
that every algorithm for the bin packing problem implies an
algorithm for the strip packing problem with the same approximation
factor. Thus, in the offline case, not only the negative results but
also the positive results from bin packing hold for strip packing.

\paragraph{Online Strip Packing}\index{strip packing problem!online}
Concerning the absolute competitive ratio Baker {\em
et~al.}~\cite{bs-satdp-83} present two algorithms with competitive
ratio $7.46$ and $6.99$. If the input sequence consists only of
squares the competitive ratio reduces to $5.83$ for both algorithms.
These algorithms are the first {\em shelf algorithms:} A shelf
algorithm classifies the rectangles according to their height, {\em
i.e.,} a rectangle is in a class $s$ if its height is in the
interval $(\alpha^{s-1},\alpha^{s}]$, for a parameter $\alpha \in
(0,1)$. Each class is packed in a separate {\em shelf}, {\em i.e.,}
into a rectangular area of width one and height $\alpha^s$, inside
the strip. A bin packing algorithm is used as a subroutine to pack
the items. Ye {\em et al.}~\cite{yhz-nosp-09} present an algorithm
with absolute competitive factor $6.6623$. Lower bounds for the
absolute performance ratio are $2$ for sequences of rectangles and
$1.75$ for sequences of squares~\cite{bbk-lbotd-82}.

Concerning the asymptotic competitive ratio, the algorithms
in~\cite{bs-satdp-83} achieve a competitive ratio of $2$ and $1.7$.
Csirik and Woeginger~\cite{cw-saosp-97} show a lower bound of
$1.69103$ for any shelf algorithm and introduce a shelf algorithm
whose competitive ratio comes arbitrarily close to this value. Han
{\em et al.}~\cite{hiyz-spvbp-07} show that for the so called {\em
Super Harmonic} algorithms, for the bin packing problem, the
competitive ratio can be transferred to the strip packing problem.
The current best algorithm for bin packing is
$1.58889$-competitive~\cite{s-obpp-02}. Thus, there is an algorithm
with the same ratio for the strip packing problem. A lower bound,
due to van Vliet~\cite{v-ilbob-92}, for the asymptotic competitive
ratio, is $1.5401$. This bound also holds for sequences consisting
only of squares.

\paragraph{Tetris}\index{Tetris} Every reader is certainly familiar with the
classical game of Tetris: Given a strip of fixed width, find an
online placement for a sequence of objects falling down from above
such that space is utilized as good as possible. In comparison to
the strip packing problem, there is a slight difference in the
objective function as Tetris aims at filling rows. In actual
optimization scenarios this is less interesting as it is not
critical whether a row is used to precisely 100\%---in particular,
as full rows do not magically disappear in real life. In this
process, no item can ever move upward, no collisions between objects
must occur, an item will come to a stop if and only if it is
supported from below, and each placement has to be fixed before the
next item arrives. Even when disregarding the difficulty of
ever-increasing speed, Tetris is notoriously difficult: Breukelaar
{\em et al.}~\cite{bdhhkl-tihea-04} show that Tetris is
\PSPACE-hard, even for the, original, limited set of different
objects.

\paragraph{Strip Packing with Tetris Constraint}\index{strip packing problem!Tetris constraint}
Tetris-like online packing has been considered before. Most notably,
Azar and Epstein~\cite{ae-tdp-97} consider online packing of
rectangles into a strip; just like in Tetris, they consider the
situation with or without rotation of objects. For the case without
rotation, they show that no constant competitive ratio is possible,
unless there is a fixed-size lower bound of $\varepsilon$ on the
side length of the objects, in which case there is an upper bound of
$O(\log\frac{1}{\varepsilon})$ on the competitive ratio.

For the case in which rotation is possible, they present a
4-competitive strategy based on shelf-packing methods: Each
rectangle is rotated such that its narrow side is the bottom side.
The algorithm tries to maintain a corridor at the right side of the
strip to move the rectangles to their shelves. If a shelf is full or
the path to it is blocked, by a large item, a new shelf is opened.
%, with all rectangles being rotated to be placed on their narrow sides;
Until now, this is also the best deterministic upper bound for
squares. Note that in this strategy gravity is not taken into
account as items are allowed to be placed
at appropriate levels.   %, even if they are unsupported.

Coffman {\em et al.}~\cite{cdw-prs-02} consider probabilistic
aspects of online rectangle packing without rotation and with Tetris
constraint. If $n$ rectangle side lengths are chosen uniformly at
random from the interval $[0,1]$, they show that there is a lower
bound of $(0.31382733...)n$ on the expected height for any
algorithm. Moreover, they propose an algorithm that achieves an
asymptotic expected height of $(0.36976421...)n$.

\paragraph{Strip Packing with Tetris and Gravity Constraint}
\index{strip packing problem!Tetris and gravity constraint}
%There is no positive but
There is one negative result for the setting with Tetris and gravity
constraint when rotation is not allowed in~\cite{ae-tdp-97}: If all
rectangles have a width of at least $\varepsilon > 0$ or of at most
$1 - \varepsilon$, then the competitive factor of any algorithms is
$\Omega(\frac{1}{\varepsilon})$.

\subsection{Our Results} %We demonstrate that
%it pays off to take a closer look at the geometry of packings.
We analyze a natural and simple heuristic called {\em BottomLeft}
(Section~\ref{sect:bottomleft}), which works similar to the one
introduced by Baker {\em et al.}~\cite{bcr-optd-80}. We show that it
is possible to give a better competitive ratio than the ratio 4
achieved by Azar and Epstein, even in the presence of gravity. We
obtain an asymptotic competitive ratio of 3.5 for {\em BottomLeft}.
%; implying an asymptotic density of at least
%$0.28571428...$
Furthermore, we introduce the strategy {\em SlotAlgorithm}
(Section~\ref{sect:stratSlotAlg}), which improves the upper bound to
$\frac{34}{13}=2.6154...$, asymptotically.

\section{The Strategy {\em BottomLeft}\label{sect:bottomleft}}
\index{BottomLeft}

In this section, we analyze the packing generated by the strategy
{\em BottomLeft}, which works as follows: We place the current
square as close as possible to the bottom of the strip; this means
that we move the square along a collision-free path from the top of
the strip to the desired position, without ever moving the square in
positive $y$-direction. We break ties by choosing the leftmost among
all possible bottommost positions.

\smallskip

A packing may leave areas of the strip empty. We call a maximal
connected component (of finite size) of the strip's empty area a
{\em hole},\index{hole} denoted by $H_h$, $h\in$%\mathbb{N}$. We
denote by $|H_h|$ the size of $H_h$. For a simplified analysis, we
finish the packing with an additional square, $\SqOrig_{n+1}$, of
side length 1. As a result, all holes have a closed boundary. Let
$H_1,\ldots, H_s$ be the holes in the packing. We can express the
height of the packing produced by BottomLeft as follows:
\begin{equation*}\label{eq:BLheight}
BL = \sum_{i=1}^n a_{i}^2 + \sum_{h=1}^s |H_h| \,.
\end{equation*}
In the following sections, we prove that
\begin{equation*}
\sum_{h=1}^s |H_h| \leq 2.5 \cdot \sum_{i=1}^{n+1} a_{i}^2 \, .
\end{equation*}
Because any strategy produces at least a height of $\sum_{i=1}^n
a_{i}^2$, and because $a_{n+1}^2 = 1$, we get
\begin{equation*}
BL = \sum_{i=1}^n a_{i}^2 + \sum_{h=1}^s |H_h| \leq \sum_{i=1}^n
a_{i}^2 + 2.5\cdot \sum_{i=1}^{n+1} a_{i}^2 \leq 3.5\cdot OPT + 2.5
\, ,
\end{equation*}
where $OPT$ denotes the height of an optimal packing.
%
%, Eq.~(\ref{eq:BLheight}) implies that $BL \leq 3.5 \cdot OPT$,
%asymptotically.
This proves:
\begin{theorem}
BottomLeft is (asymptotically) $3.5$-competitive.
\end{theorem}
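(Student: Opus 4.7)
The plan is to establish the key inequality
\[
\sum_{h=1}^s |H_h| \leq 2.5 \cdot \sum_{i=1}^{n+1} a_{i}^2,
\]
from which the theorem follows by the chain of inequalities already displayed in the preceding paragraph of the excerpt. My approach is a charging argument: distribute the total hole area over the squares so that each square $\SqOrig_i$ is charged for at most $2.5\,a_i^2$.

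First, I would associate to each hole $H_h$ a \emph{closing square} $\SqOrig_{c(h)}$, namely the square whose placement caused the empty component $H_h$ to become enclosed (finite). Because BottomLeft brings the incoming square down along a collision-free path and then as far left as possible, the closing square must lie along the top boundary of $H_h$; otherwise $H_h$ would still have been reachable from above immediately after $\SqOrig_{c(h)}$ came to rest. The virtual capping square $\SqOrig_{n+1}$ plays the role of closing square for any regions still open at termination, which is precisely why $a_{n+1}^2$ must appear in the sum.

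Next comes the main geometric step: for a fixed square $\SqOrig_i$, I would bound the total area of all holes charged to $\SqOrig_i$ by $2.5\,a_i^2$. The intuition is that any such hole has its top edge lying inside the bottom side of $\SqOrig_i$, so its width is at most $a_i$; meanwhile its vertical extent is controlled by the supporting squares that prevented $\SqOrig_i$ from sinking further, which by the BottomLeft rule must be high enough (and close enough) to obstruct that descent. A case analysis — according to whether $\SqOrig_i$ rests on two neighboring squares, on one square and a strip wall, or straddles a gap creating holes both directly beneath it and in an L-shape adjacent to it — then yields a per-case area bound that is linear in $a_i^2$, whose worst case is the target constant~$2.5$.

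The main obstacle will be pinning down the constant $2.5$ exactly in the case analysis: a single square $\SqOrig_i$ may simultaneously close several holes (for instance by bridging a chain of differently sized stacks), so one cannot simply bound each hole individually but must argue about the sum. The key structural fact I would exploit is a \emph{minimality invariant} of BottomLeft: had any empty region to the left of or below $\SqOrig_i$ been large enough to admit $\SqOrig_i$, the algorithm would have placed it there instead. This forces every hole charged to $\SqOrig_i$ to be strictly narrower or strictly shallower than $a_i$ in a quantitative way, and the trade-offs between these width/height deficits across all holes sharing $\SqOrig_i$ are what must be aggregated carefully to yield the factor $2.5$ rather than something larger.
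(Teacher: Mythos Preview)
Your charging scheme has a genuine gap: assigning an entire hole to its closing square (the ``lid'') cannot yield any uniform constant, let alone $2.5$. The minimality invariant you invoke only says that the lid $\SqOrig_i$ could not have descended into the hole, which forces the hole's \emph{opening width} to be at most $a_i$; it says nothing about the \emph{depth}. A hole can be narrow and arbitrarily deep relative to $a_i$---think of $\SqOrig_i$ bridging from the top of a tall thin stack to an adjacent wall, with a long sliver below. Your proposed dichotomy ``strictly narrower or strictly shallower than $a_i$'' is therefore not a dichotomy at all: narrow-and-deep is the bad case, and nothing in your argument rules it out or pays for it.

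The paper's fix is precisely to \emph{not} charge a hole to a single square. For each hole it identifies, beyond the lid $\SqLoch_1$, the flanking squares $\SqLoch_2$ (left) and $\SqLoch_k,\SqLoch_{k-1}$ (right), and shows via diagonal lines of slope $\pm 1$ (the ``left'' and ``right diagonals'') that the hole sits inside a region whose area is at most $(\beta_1^h)^2+\tfrac12(\rho_2^h)^2$ (Type~I) or $(\beta_1^h)^2+(\beta_k^h)^2+\tfrac12(\rho_2^h)^2+\tfrac12(\lambda_{k-1}^h)^2$ (Type~II), where $\beta,\rho,\lambda$ are the lengths of the relevant side segments on the hole boundary. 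The depth is thus paid for by $\SqLoch_2$'s right side and $\SqLoch_{k-1}$'s left side, not by the lid. Because the left diagonal can re-enter the hole, a splitting procedure with ``virtual lids'' is needed; tallying the maximal charges to each side (left $\tfrac12$, right $\tfrac12$, bottom $1$, plus $\tfrac12$ when the square also serves as a virtual lid) gives the $2.5$. If you want to salvage your outline, the essential missing idea is to let the side-squares absorb the depth; without that, no case analysis on how $\SqOrig_i$ rests will close.
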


%\paragraph{Implementation} To implement the strategy, we can use robot-motion-planning
%techniques: First, we shrink the square, $\SqOrig_i$, that has to be
%placed, to its midpoint and appropriately enlarge the squares
%already placed by computing their Minkowski-Sums with $\SqOrig_i$.
%This yields a possibly disconnected area, the {\em free space}
%${\cal C}_{\rm free}$, containing all possible positions for
%$\SqOrig_i$. We calculate a trapezoidal
%decomposition~\cite{s-sfira-91} of ${\cal C}_{\rm free}$ and a
%connectivity graph for the trapezoids (the {\em road map}). Using a
%simple breadth-first search, we can determine the component of
%${\cal C}_{\rm free}$ that contains the start position of
%$\SqOrig_i$. This component is given as a set of trapezoids, from
%which we can choose the bottom-leftmost one. For $k$ placed squares,
%the whole procedure can be done in time $O(k\log^2k)$; see de~Berg
%{\em et~al.}~\cite{bkos-cgaa-00} for a detailed description.

\paragraph{Definitions}
Before we start with the analysis, we need some definitions: We
denote the bottom (left, right) side of the strip by $B_S$ ($R_S$,
$L_S$; respectively), and the sides of a square, $\SqOrig_i$, by
$B_{\SqOrig_i}$, $T_{\SqOrig_i}$, $R_{\SqOrig_i}$, $L_{\SqOrig_i}$
(bottom, top, right, left; respectively); see
Fig.~\ref{fig:defs_psfrag}. The $x$-coordinates of the left and
right side of $\SqOrig_i$ in a packing are $l_{\SqOrig_i}$ and
$r_{\SqOrig_i}$; the $y$-coordinates of the top and bottom side
$t_{\SqOrig_i}$ and $b_{\SqOrig_i}$, respectively. Let the {\em left
neighborhood}, $N_L(\SqOrig_i)$, be the set of squares that touch
the left side of $\SqOrig_i$. In the same way we define the bottom,
top, and right neighborhoods, denoted by $N_B(\SqOrig_i),
N_T(\SqOrig_i)$, and $N_R(\SqOrig_i)$,
respectively.\index{neighborhood}\index{neighborhood!left}
\index{neighborhood!right}\index{neighborhood!bottom}\index{neighborhood!top}

A point, $P$, is called {\em unsupported}, if there is a vertical
line segment pointing from $P$ to the bottom of $S$ whose interior
lies completely inside a hole. Otherwise, $P$ is {\em supported}. A
section of a line segment is supported, if every point in this
section is supported.

For an object $\xi$, we refer to the boundary as $\partial \xi$, to
the interior as $\xi^\circ$, and to its area by $|\xi|$. If $\xi$ is
a line segment, then $|\xi|$ denotes its length.

\begin{figure}[t]
\psfrag{Ait}{$\SqOrig_i$} %
\psfrag{sky}{${\cal S}_{\SqOrig_i}$} %
\psfrag{leftS}{$L_S$} %
\psfrag{bS}{$B_S$} %
\psfrag{leftseq}{${\cal L}_{\SqOrig_i}$} %
\psfrag{bseq}{${\cal B}_{\SqOrig_i}$} %
\psfrag{left}{$L_{\SqOrig_i}$} %
\psfrag{rig}{$R_{\SqOrig_i}$} %
\psfrag{top}{$T_{\SqOrig_i}$} %
\psfrag{X}{$B_{\SqOrig_i}$} %
\centering
\includegraphics[height=6.5cm]{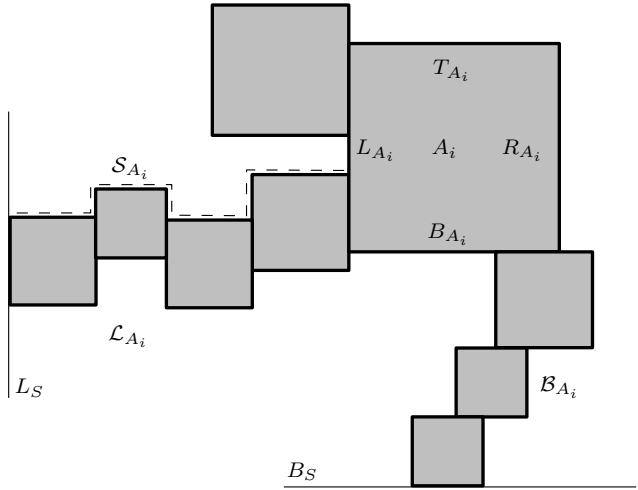}
\caption{The square $\SqOrig_i$ with its left sequence ${\cal
L}_{\SqOrig_i}$, the bottom sequence ${\cal B}_{\SqOrig_i}$, and the
skyline ${\cal S}_{\SqOrig_i}$. The left sequence ends at the left
side of $S$, and the bottom sequence at the bottom side of $S$.}
\label{fig:defs_psfrag}
\end{figure}

\paragraph{Outline of the Analysis}
We proceed as follows: First, we state some basic properties of the
generated packing (Section~\ref{sect:basicproperties}). In
Section~\ref{sect:modify} we simplify the shape of the holes by
partitioning a hole, produced by BottomLeft, into several disjoint
new holes. In the packing, these new holes are open at their top
side, so we introduce {\em virtual lids} that close these holes.
Afterwards, we estimate the area of a hole in terms of the squares
that enclose the hole (Section~\ref{sect:holesize}). First, we bound
the area of holes that have no virtual lid and whose boundary does
not intersect the boundary of the strip.
%(Section~\ref{sect:nonvirtual}).
Then, we analyze
holes with a virtual lid% (Section~\ref{sect:virtual})
; as it turns out, these are ``cheaper'' than holes with non-virtual
lids. Finally, we show that holes that touch the strip's boundary
are just a special
case. % (Section~\ref{sect:strip}).
Section~\ref{sect:summingupcharges} summarizes the costs that are
charged to a square.

\subsection{Basic Properties of the Generated Packing\label{sect:basicproperties}}
In this section, we show some basic properties of a packing
generated by BottomLeft. In particular, we analyze structural
properties of the boundary of a hole.
% describe how the squares that
% contribute to the boundary of a hole can be arranged.

We say that a square, $\SqOrig_i$, {\em contributes} to the boundary
of a hole, $H_h$, iff $\partial \SqOrig_i$ and $\partial H_h$
intersect in more than one point, {\em i.e.,} $|\partial \SqOrig_{i}
\cap \partial H_h| > 0$. For convenience, we denote the squares on
the boundary of a hole by $\SqLoch_1, \ldots, \SqLoch_k$ in
counterclockwise order starting with the
%square that contains both $P$ and $Q$.
upper left square; see Fig.~\ref{fig:example_bn}. It is always clear
from the context which hole defines this sequence of squares. Thus,
we chose not to introduce an additional superscript referring to the
hole. We define $\SqLoch_{k+1} = \SqLoch_1$, $\SqLoch_{k+2} =
\SqLoch_2$, and so on. By $P_{i,i+1}$ we denote the point where
$\partial H_h$ leaves the boundary of $\SqLoch_i$ and enters the
boundary of $\SqLoch_{i+1}$; see Fig.~\ref{fig:bottomseq}.

Let $\SqOrig_i$ be a square packed by BottomLeft. Then $\SqOrig_i$
can be moved neither to the left nor down. This implies that either
$N_L(\SqOrig_i)\neq \emptyset$ ($N_B(\SqOrig_i)\neq \emptyset$) or
that $L_{\SqOrig_i}$ ($B_{\SqOrig_i}$) coincides with $L_S$ ($B_S$).
Therefore, the following two sequences ${\cal L}_{\SqOrig_i}$ and
${\cal B}_{\SqOrig_i}$ exist: The first element of ${\cal
L}_{\SqOrig_i}$ (${\cal B}_{\SqOrig_i}$) is ${\SqOrig_i}$. The next
element is chosen as an arbitrary left (bottom) neighbor of the
previous element. The sequence ends if no such neighbor exits. We
call ${\cal L}_{\SqOrig_i}$ the {\em left sequence} and ${\cal
B}_{\SqOrig_i}$ the {\em bottom sequence} of a square ${\SqOrig_i}$;
see Fig.~\ref{fig:defs_psfrag} \index{bottom sequence}\index{left
sequence}

We call the polygonal chain from the upper right corner of the first
element of ${\cal L}_{\SqOrig_i}$ to the upper left corner of the
last element, while traversing the boundary of the sequence in
counterclockwise order, the {\em skyline}, ${\cal S}_{\SqOrig_i}$,
of ${\SqOrig_i}$.

% By $L_{{\cal B}_\SqOrig}$ we
% denote the curve from the upper left upper corner of the first
% element of ${\cal B}_\SqOrig$ to the lower left corner of the last element
% while traversing the boundary of the sequence in counterclockwise
% order.

\begin{figure}
\psfrag{A1t}{$\tilde{A}_1$} %
\psfrag{A2t}{$\tilde{A}_2$} %
\psfrag{A3t}{$\tilde{A}_3$} %
\psfrag{Akt}{$\tilde{A}_k$} %
\psfrag{Akm1t}{$\tilde{A}_{k-1}$} %
\psfrag{Hh}{$H_h$} %
\psfrag{Hh1}{$H_{h+1}$} %
\psfrag{Hh2}{$H_{h+2}$} %
\psfrag{Hhstar}{$H_h^\star$} %
\psfrag{Hstarhp1}{$H^\star_{h+1}$} %
\psfrag{Dhl}{$D^h_l$} %
\psfrag{Dhr}{$D^h_r$} %
\psfrag{Dhp1l}{$D^{h+1}_l$} %
\psfrag{starDhr}{$^\star \! D^h_r$} %
\psfrag{starDhl}{$^\star \! D^h_l$} %
\centering
\includegraphics[height=6.5cm]{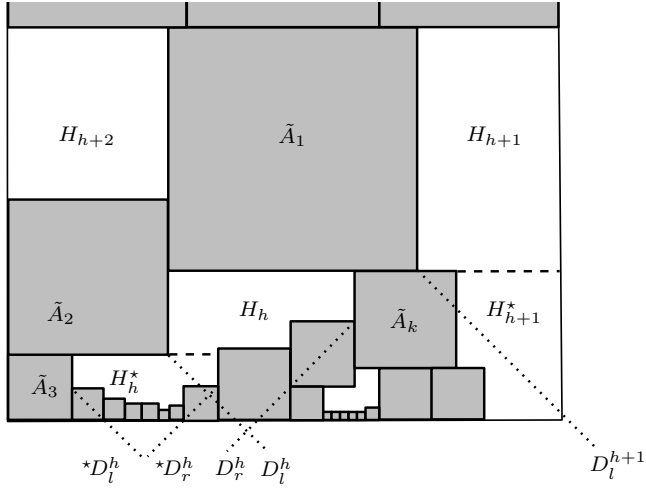}
\caption{A packing produced by BottomLeft. The squares
$\SqLoch_1,\ldots,\SqLoch_k$ contribute to the boundary of the hole
$H_h$. In the analysis, $H_h$ is split into a number of subholes. In
the shown example one new subhole $H_h^\star$ is created. Note that
the square $\SqLoch_1$ also contributes to the holes $H_{h+1}$ and
$H_{h+2}$. Moreover, it serves as a virtual lid for $H_{h+1}^\star$.
} \label{fig:example_bn}
\end{figure}

Obviously, ${\cal S}_{\SqOrig_i}$ has an endpoint on $L_S$ and
${\cal S}_{\SqOrig_i}^\circ \cap H_h^\circ = \emptyset$.
% and $L_{{\cal B}_{\SqOrig_i}}^\circ \cap H_h^\circ = \emptyset$.
With the help of ${\cal L}_{\SqOrig_i}$ and ${\cal B}_{\SqOrig_i}$
we can prove (see Fig.~\ref{fig:bottomseq}):

\begin{lemma}\label{lem:directionoftraversal}
Let ${\SqLoch_i}$ be a square that contributes to $\partial H_h$.
Then, \begin{list}{}{}
\item[(i)] $\partial H_h \cap \partial {\SqLoch_i}$ is a single curve, and
\item[(ii)] if $\partial H_h$ is traversed in counterclockwise (clockwise) order,
$\partial H_h \cap \partial {\SqLoch_i}$ is traversed in clockwise
(counterclockwise) order w.r.t.\ $\partial {\SqLoch_i}$.
%; see Fig.~\ref{fig:bottomseq}.
\end{list}
\end{lemma}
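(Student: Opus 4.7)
My plan for part (i) is proof by contradiction. Assume that $\partial H_h \cap \partial \tilde{A}_i$ has at least two disjoint maximal arcs, say $\alpha$ and $\beta$. Going around $\partial \tilde{A}_i$ between $\alpha$ and $\beta$, we cross an arc $\gamma$ of $\partial \tilde{A}_i$ which is not part of $\partial H_h$; along $\gamma$, the square $\tilde{A}_i$ must be in contact with other squares or with $\partial S$, otherwise $\gamma$ itself would be part of $\partial H_h$ (or of some other hole $H_{h'}$, in which case $H_h$ would have to reach around the ``blocking'' square to meet $\tilde{A}_i$ twice). Since $H_h$ is a single connected empty region, there must still be a path in $H_h^\circ$ linking a neighborhood of $\alpha$ to a neighborhood of $\beta$; the plan is to show that the support structure of a BottomLeft packing forbids such a path.

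To make this precise I would invoke the left sequence $\mathcal{L}_{\tilde{A}_i}$ and the bottom sequence $\mathcal{B}_{\tilde{A}_i}$, together with their analogues for any interposing square along $\gamma$. The skyline $\mathcal{S}_{\tilde{A}_i}$, combined with the symmetric bottom-running chain, partitions the strip into an occupied ``staircase'' region (in which $\mathcal{L}_{\tilde{A}_i}$, $\mathcal{B}_{\tilde{A}_i}$ sit, and whose interior cannot meet $H_h^\circ$ by the stated property $\mathcal{S}_{\tilde{A}_i}^\circ \cap H_h^\circ = \emptyset$) and the remaining empty portion. I would then carry out a case analysis on which sides of $\tilde{A}_i$ contain $\alpha$ and $\beta$ (same side, or two out of top/bottom/left/right). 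In each case, the presence of the squares pinching off $\gamma$, combined with the fact that every square is pushed as far down and as far left as possible, forces at least one of $\alpha$ or $\beta$ to be trapped inside the staircase region of $\tilde{A}_i$ or of an interposing square, contradicting its being on $\partial H_h$.

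For part (ii), once (i) is established, the orientation claim follows from a standard local argument: the open regions $\tilde{A}_i^\circ$ and $H_h^\circ$ are disjoint and lie on opposite sides of their shared boundary arc. A counterclockwise traversal of $\partial H_h$ keeps $H_h$ locally on the left, so $\tilde{A}_i$ is locally on the right of the shared arc; the induced traversal of $\partial \tilde{A}_i$ therefore has $\tilde{A}_i$ on the right, which is precisely the clockwise direction on $\partial \tilde{A}_i$. I would formalize this by picking a smooth parametrization of the arc and comparing the inward normals of the two regions, which point in opposite directions.

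The main obstacle is part (i): topologically, a hole really can visit the boundary of a single square in two disjoint arcs (picture a $C$-shaped empty region capped by a square). Ruling this out is not a purely topological statement; it must genuinely exploit the BottomLeft placement rule, gravity, and the existence of the left/bottom sequences. I expect the case analysis on how $\alpha$, $\beta$, and $\gamma$ distribute over the four sides of $\tilde{A}_i$ to be the technically delicate step, with the ``top vs.\ bottom'' and ``left vs.\ right'' pairings requiring two different but symmetric staircase arguments. Part (ii) is then essentially automatic.
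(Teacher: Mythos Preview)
Your approach to part (ii) is exactly the paper's: $H_h^\circ$ on the left of a counterclockwise traversal forces $\tilde{A}_i^\circ$ to be on the right, which is the clockwise orientation of $\partial\tilde{A}_i$.

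For part (i), your instinct is right---contradiction plus the bottom/left sequences---but the paper short-circuits your side-by-side case analysis with a single Jordan-curve argument. Here is the difference. You plan to look at the sequences of $\tilde{A}_i$ itself (and of interposing squares along $\gamma$), and then split on which sides of $\tilde{A}_i$ carry $\alpha$ and $\beta$. The paper instead takes any simple arc $C\subset H_h^\circ$ joining a point of $c_1$ to a point of $c_2$, closes it with the straight segment through $\tilde{A}_i$ between those endpoints, and calls the resulting Jordan curve $C'$. Because $c_1$ and $c_2$ are disconnected on $\partial\tilde{A}_i$, some neighbor $\tilde{A}_j$ of $\tilde{A}_i$ lies in the bounded component of $C'$. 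Now a single dichotomy finishes: if $\tilde{A}_j\in N_L(\tilde{A}_i)\cup N_R(\tilde{A}_i)\cup N_B(\tilde{A}_i)$, its bottom sequence $\mathcal{B}_{\tilde{A}_j}$ would have to reach $B_S$ but is trapped inside $C'$; if $\tilde{A}_j\in N_T(\tilde{A}_i)$, its left sequence $\mathcal{L}_{\tilde{A}_j}$ would have to reach $L_S$ but is likewise trapped. Either way, contradiction.

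What this buys is that the ``delicate step'' you anticipate---the case split over pairs of sides---disappears entirely. The Jordan curve $C'$ is the missing device: it converts the connectivity of $H_h$ into a topological trap for the sequences of the \emph{interposing} square $\tilde{A}_j$, rather than forcing you to reason about how the staircase of $\tilde{A}_i$ interacts with each possible placement of $\alpha$ and $\beta$. Your plan would likely succeed, but with substantially more work than necessary.
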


\begin{proof}
For the first part, suppose for a contradiction that $\partial H_h
\cap \partial {\SqLoch_i}$ consists of at least two curves, $c_1$
and $c_2$. Consider a simple curve, $C$, that lies completely inside
$H_h$ and has one endpoint in $c_1$ and the other one in $c_2$. We
add the straight line between the endpoints to $C$ and obtain a
simple closed curve $C'$. As $c_1$ and $c_2$ are not connected,
there is a square, $\SqLoch_j$, inside $C'$ that is a neighbor of
${\SqLoch_i}$. If ${\SqLoch_j}$ is a left, right or bottom neighbor
of ${\SqLoch_i}$ this contradicts the existence of ${\cal
B}_{\SqLoch_j}$ and if it is a top neighbor this contradicts the
existence of ${\cal L}_{\SqLoch_j}$. Hence, $\partial H_h \cap
\partial {\SqLoch_i}$ is a single curve.

For the second part, imagine that we walk along $\partial H_h$ in
counterclockwise order. Then, the interior of $H_h$ lies on our
left-hand side, and all squares that contribute to $\partial H_h$
lie on our right-hand side. Hence, their boundaries are traversed in
clockwise order w.r.t.\ their interior.
\end{proof}

We define $P$ and $Q$ to be the left and right endpoint,
respectively, of the line segment $\partial \SqLoch_1 \cap \partial
H_h$.
%Now, we turn to the description of the boundary of a hole.
Two squares $\SqLoch_i$ and $\SqLoch_{i+1}$ can basically be
arranged in four ways, {\em i.e.,} $\SqLoch_{i+1}$ can be a left,
right, bottom or top neighbor of $\SqLoch_i$. The next lemma
restricts these possibilities:

%\begin{figure}
%\centering
%\includegraphics[height=5cm]{gen_neighbors.eps}
%\caption{The general possibilities for the square $\SqLoch_{i+1}$}
%\label{fig:gen_neighbors}
%\end{figure}

\begin{lemma}\label{lem:pairsofsquares_1}
Let $\SqLoch_{i}$, $\SqLoch_{i+1}$ be a pair of squares that
contribute to the boundary of a hole $H_h$.
\begin{list}{}{\setlength{\itemsep}{3pt}}
\item[(i)]
If $\SqLoch_{i+1} \in N_L(\SqLoch_{i})$, then either $\SqLoch_{i+1}
= \SqLoch_1$ or $\SqLoch_{i} = \SqLoch_1$.
\item[(ii)]
If $\SqLoch_{i+1} \in N_T(\SqLoch_{i})$, then $\SqLoch_{i+1} =
\SqLoch_1$ or $\SqLoch_{i+2} = \SqLoch_1$.
\end{list}
\end{lemma}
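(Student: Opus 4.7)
The plan is to combine a local analysis at $P_{i,i+1}$ using Lemma~\ref{lem:directionoftraversal} with a short global argument based on the gravity constraint and the definition of $\SqLoch_1$ as the upper-left boundary square.

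For part (i), since $\SqLoch_{i+1}\in N_L(\SqLoch_i)$ the transition point $P_{i,i+1}$ lies on the shared edge, which is a segment of $\SqLoch_i$'s left side. By Lemma~\ref{lem:directionoftraversal}(ii) we approach $P_{i,i+1}$ moving upward along $\SqLoch_i$'s left (clockwise around $\SqLoch_i$) with $H_h$ to our west, and we leave clockwise around $\SqLoch_{i+1}$, starting on its right side. Since interior points of the shared edge are not adjacent to $H_h$, the transition point must be an endpoint of the shared segment, and a quick geometric check rules out the top endpoint (the hole would be forced into $\SqLoch_{i+1}$'s interior). Hence $P_{i,i+1}$ is the bottom-right corner of $\SqLoch_{i+1}$, we have $b_{\SqLoch_{i+1}}>b_{\SqLoch_i}$, and after the 90-degree turn the traversal runs westward along a nontrivial portion of $\SqLoch_{i+1}$'s bottom: the square $\SqLoch_{i+1}$ acts as a \emph{ceiling} above $H_h$, in the sense that part of its bottom lies on $\partial H_h$.

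I would then close part (i) with the following global argument. Any ceiling square must have the remainder of its bottom supported by the top of some square lying outside $H_h$, by the gravity constraint; a putative third ceiling sandwiched strictly between two others would have its entire bottom over $H_h$ and hence no support, so at most two ceilings can coexist above a single hole. Since $\SqLoch_1$ is by convention the upper-left (westernmost) ceiling, a LEFT-neighbor transition must meet $\SqLoch_1$: either $\SqLoch_{i+1}=\SqLoch_1$ (the standard situation where the traversal enters the westernmost ceiling from the east), or, in the symmetric configuration in which $\SqLoch_1$ itself departs through its own bottom-left corner to a left neighbor, $\SqLoch_i=\SqLoch_1$.

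For part (ii) the local step is analogous: with the shared edge on $\SqLoch_i$'s top, the traversal turns from eastward along $\SqLoch_i$'s top to northward along $\SqLoch_{i+1}$'s left, so $\SqLoch_{i+1}$ stands above $\SqLoch_i$ with $H_h$ immediately to its west. As we walk up $\partial\SqLoch_{i+1}$ no left neighbor of $\SqLoch_{i+1}$ can intervene (one would lie inside $H_h$), so the traversal must first reach $\SqLoch_{i+1}$'s top-left corner, where the successor $\SqLoch_{i+2}$ takes over. A final case distinction -- whether $H_h$ is capped at this corner by the ceiling or $\SqLoch_{i+1}$ itself already plays the ceiling role -- together with the same ceiling/gravity argument forces either $\SqLoch_{i+2}=\SqLoch_1$ or $\SqLoch_{i+1}=\SqLoch_1$. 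The main obstacle I expect is the ceiling-uniqueness step used in both parts: the gravity argument is cleanest when all ceilings sit at a common height, and a small additional reduction is needed to handle staircase-shaped hole tops in which adjacent ceiling squares are at different heights.
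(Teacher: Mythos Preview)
Your local analysis at $P_{i,i+1}$ is essentially the paper's: using Lemma~\ref{lem:directionoftraversal} to pin the transition at the lower endpoint of the shared edge is exactly what the paper does. However, you only treat the case $b_{\SqLoch_{i+1}}>b_{\SqLoch_i}$; the lower endpoint of $R_{\SqLoch_{i+1}}\cap L_{\SqLoch_i}$ is the bottom-left corner of $\SqLoch_i$ when $b_{\SqLoch_i}>b_{\SqLoch_{i+1}}$, and then it is $\SqLoch_i$, not $\SqLoch_{i+1}$, that plays the ceiling role. The paper keeps both cases and decides between $\SqLoch_{i+1}=\SqLoch_1$ and $\SqLoch_i=\SqLoch_1$ precisely by comparing $b_{\SqLoch_{i+1}}$ and $b_{\SqLoch_i}$.

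The real divergence is in the global step. Your ``at most two ceilings'' argument via gravity is not the paper's mechanism, and the obstacle you flag is genuine: with a staircase-shaped top, a middle ceiling square can be supported from below by the top of its right neighbour (itself a ceiling), so the ``sandwiched ceiling has no support'' claim fails as stated. The paper avoids this entirely by using the \emph{bottom sequences} ${\cal B}_{\SqLoch_i}$ and ${\cal B}_{\SqLoch_{i+1}}$: together with $B_S$ they enclose a bounded region that contains $H_h$, and the segment $\overline{PQ}\subset\partial\SqLoch_1$ must sit at the top of that enclosure, which forces the higher of $\SqLoch_i,\SqLoch_{i+1}$ to be $\SqLoch_1$. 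This enclosure argument is one line and needs no ceiling-counting.

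Your part~(ii) contains a concrete error. The assertion that ``no left neighbor of $\SqLoch_{i+1}$ can intervene (one would lie inside $H_h$)'' is false: a left neighbor is a solid square with its own bottom sequence reaching $B_S$, so it certainly does not lie inside the hole. In fact BottomLeft guarantees that $\SqLoch_{i+1}$ \emph{has} a left neighbor (or touches $L_S$), and the paper's proof of~(ii) uses exactly this: traversing $\partial\SqLoch_{i+1}$ clockwise from the entry on $B_{\SqLoch_{i+1}}$, the next square $\SqLoch_{i+2}$ must lie in $N_L(\SqLoch_{i+1})\cup N_B(\SqLoch_{i+1})$; the $N_L$ case reduces to part~(i), and the $N_B$ case immediately gives $\SqLoch_{i+1}=\SqLoch_1$. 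So rather than reaching the top-left corner of $\SqLoch_{i+1}$, you should be exploiting the forced left neighbor.
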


\begin{proof}
(i) Let $\SqLoch_{i+1} \in N_L(\SqLoch_{i})$. Consider the endpoints
of the vertical line $R_{\SqLoch_{i+1}} \cap L_{\SqLoch_{i}}$; see
Fig.~\ref{fig:bottomseq}. We traverse $\partial H_h$ in
counterclockwise order starting in $P$. By Lemma
\ref{lem:directionoftraversal}, we traverse $\partial \SqLoch_i$ in
clockwise order, and therefore, $P_{i,i+1}$ is the lower endpoint of
$R_{\SqLoch_{i+1}} \cap L_{\SqLoch_{i}}$. Now, ${\cal
B}_{\SqLoch_{i}}$, ${\cal B}_{\SqLoch_{i+1}}$, and the segment of
$B_S$
% between the lower left corner of the last element of
% ${\cal B}_{\SqLoch_{i+1}}$ and the lower right corner of the last element of
% ${\cal B}_{\SqLoch_{i}}$
completely enclose an area that completely contains the hole, $H_h$.
If the sequences have a square in common, we consider the area
enclosed up to the first intersection.
% the closed area
% that is formed by the parts of the sequences from the beginning
% until $\SqLoch_z$ together with $\SqLoch_i$ and $\SqLoch_{i+1}$. The area has to contain
% $H_h$ completely because it contains $P_{i,i+1} \in \partial H_h$.
Therefore, if $b_{\SqLoch_i+1}\geq b_{\SqLoch_{i}}$ then
$\SqLoch_{i+1}=\SqLoch_1$ else $\SqLoch_{i}=\SqLoch_1$ by the
definition of $\overline{PQ}$.

The proof of (ii) follows almost directly from the first part. Let
$\SqLoch_{i+1} \in N_T(\SqLoch_{i})$. If $\partial H_h$ is traversed
in counterclockwise order, we know that $\partial \SqLoch_{i+1}$ is
traversed in clockwise order, and we know that $\SqLoch_{i+1}$ must
be supported to the left. Therefore, $\SqLoch_{i+2} \in
N_L(\SqLoch_{i+1}) \cup N_B(\SqLoch_{i+1})$. Using the first part of
the lemma, we conclude that, if $\SqLoch_{i+2} \in
N_L(\SqLoch_{i+1})$ then $\SqLoch_{i+2} = \SqLoch_{1}$ or
$\SqLoch_{i+1} = \SqLoch_{1}$, or if $\SqLoch_{i+2} \in
N_B(\SqLoch_{i+1})$ then $\SqLoch_{i+1} = \SqLoch_{1}$.
%and the result follows from the first part of the lemma.
\end{proof}

\begin{figure}
\psfrag{Ait}{$\SqLoch_i$} %
\psfrag{Aip1t}{$\SqLoch_{i+1}$} %
\psfrag{Aip2t}{$\SqLoch_{i+2}$} %
\psfrag{P}{$P$} %
\psfrag{Hh}{$H_h$} %
\psfrag{Piip1}{$P_{i,i+1}$} %
\psfrag{RL}{$R_{\SqLoch_{i+1}} \! \cap L_{\SqLoch_{i}}$} %
\psfrag{BotAip1}{${\cal B}_{\tilde{A}_{i+1}}$} %
\psfrag{BotAi}{${\cal B}_{\tilde{A}_{i}}$} %
\centering
\includegraphics[height=6cm]{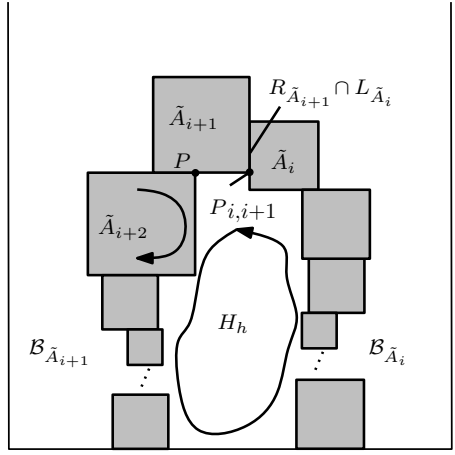}
\caption{The hole $H_h$ with the two squares $\SqLoch_i$ and
$\SqLoch_{i+1}$ and their bottom sequences. In this situation,
$\SqLoch_{i+1}$ is $\SqLoch_1$. If $\partial H_h$ is traversed in
counterclockwise order then $\partial H_h \cap
\partial \tilde{A}_{i+2}$ is traversed in clockwise order w.r.t.\ to
$\partial \tilde{A}_{i+2}$.} \label{fig:bottomseq}
\end{figure}

The last lemma implies that either $\SqLoch_{i+1} \in
N_B(\SqLoch_{i})$ or $\SqLoch_{i+1} \in N_R(\SqLoch_{i})$ holds for
all $i=2,\ldots,k-2$; see Fig.~\ref{fig:example_bn}. The next lemma
shows that there are only two possible arrangements of the squares
$\SqLoch_{k-1}$ and $\SqLoch_k$:

\begin{lemma}
Either $\SqLoch_{k} \in N_R(\SqLoch_{k-1})$ or $\SqLoch_{k} \in
N_T(\SqLoch_{k-1})$.
\end{lemma}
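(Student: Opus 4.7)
My plan is to prove the lemma by contradiction, systematically excluding the two remaining neighborhood types $N_L$ and $N_B$ for the pair $(\SqLoch_{k-1}, \SqLoch_k)$.

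The first case, $\SqLoch_k \in N_L(\SqLoch_{k-1})$, is disposed of in one line by applying Lemma~\ref{lem:pairsofsquares_1}(i) to $(\SqLoch_{k-1}, \SqLoch_k)$: it would force $\SqLoch_k = \SqLoch_1$ or $\SqLoch_{k-1} = \SqLoch_1$, both impossible since $\SqLoch_1, \SqLoch_{k-1}, \SqLoch_k$ are distinct squares for any non-degenerate hole (the identification $\SqLoch_{k+1} = \SqLoch_1$ is purely cyclic, so $\SqLoch_k \neq \SqLoch_1$, and $\SqLoch_{k-1} \neq \SqLoch_1$ whenever $k \geq 3$).

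The harder case, and the main obstacle, is ruling out $\SqLoch_k \in N_B(\SqLoch_{k-1})$. My approach is to mirror the proof technique of Lemma~\ref{lem:pairsofsquares_1}(i). Assume the contrary; then $\SqLoch_k$ shares its top edge with the bottom of $\SqLoch_{k-1}$, and by Lemma~\ref{lem:directionoftraversal} combined with the counterclockwise orientation, $H_h$ must lie east of both squares and $P_{k-1,k}$ is the east endpoint of that shared edge. Because $\SqLoch_k$ lies directly below $\SqLoch_{k-1}$, the bottom sequence ${\cal B}_{\SqLoch_{k-1}}$ can be chosen to pass through $\SqLoch_k$, so ${\cal B}_{\SqLoch_{k-1}}$ and ${\cal B}_{\SqLoch_k}$ coincide from $\SqLoch_k$ onward; together with a segment of $B_S$ they enclose a region containing $H_h$, exactly paralleling the enclosure step in the proof of Lemma~\ref{lem:pairsofsquares_1}(i).

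I then trace $\partial H_h$ clockwise around $\partial \SqLoch_k$---entering on its top and running east by Lemma~\ref{lem:directionoftraversal}---until it transitions into $\SqLoch_{k+1} = \SqLoch_1$, and invoke the fact that $\partial \SqLoch_1 \cap \partial H_h$ is the horizontal lid segment $\overline{PQ}$. Comparing $b_{\SqLoch_{k-1}}$ and $b_{\SqLoch_k}$ (with $b_{\SqLoch_k} < b_{\SqLoch_{k-1}}$ under the $N_B$ assumption) and using the definition of $P$ and $Q$ as the endpoints of this lid, I expect the same dichotomy that appeared at the end of the proof of Lemma~\ref{lem:pairsofsquares_1}(i) to force $\SqLoch_{k-1} = \SqLoch_1$ or $\SqLoch_k = \SqLoch_1$, contradicting $k \geq 3$.

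The delicate point I anticipate is that Lemma~\ref{lem:pairsofsquares_1}(i) handled the situation where the hole sat above two horizontally adjacent squares and was naturally enclosed between their bottom sequences; here the hole lies east of two vertically stacked squares, so the east--west direction is the one to exploit, and the contradiction has to be read off from the lid $\overline{PQ}$ at the top rather than from $B_S$ at the bottom. Setting up this symmetric identification cleanly, and verifying that the final case distinction really reproduces the one in Lemma~\ref{lem:pairsofsquares_1}(i), is the crux of the argument.
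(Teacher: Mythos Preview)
Your approach differs from the paper's, and the $N_B$ case has a real gap.

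The paper does not argue by excluding $N_L$ and $N_B$. It proceeds directly: traversing $\partial H_h$ clockwise from $P$, one first observes from the definition of $\overline{PQ}$ and Lemma~\ref{lem:directionoftraversal} that $P_{k,1}$ lies on $L_{\SqLoch_k}$. Continuing clockwise on $\partial H_h$ (hence counterclockwise on $\partial\SqLoch_k$), the next transition point $P_{k-1,k}$ is reached; if it lies on $L_{\SqLoch_k}$ one gets $\SqLoch_k\in N_R(\SqLoch_{k-1})$, and if it lies on $B_{\SqLoch_k}$ one gets $\SqLoch_k\in N_T(\SqLoch_{k-1})$. Any other location of $P_{k-1,k}$ would force the entire bottom side $B_{\SqLoch_k}$ onto $\partial H_h$, leaving $\SqLoch_k$ without a bottom neighbor---impossible under the gravity constraint. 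That is the whole proof.

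Your treatment of $N_L$ via Lemma~\ref{lem:pairsofsquares_1}(i) is fine (for $k\ge 3$). The problem is your $N_B$ argument. You propose to mirror the enclosure from Lemma~\ref{lem:pairsofsquares_1}(i), but that enclosure works because ${\cal B}_{\SqLoch_i}$ and ${\cal B}_{\SqLoch_{i+1}}$ descend on \emph{opposite sides} of the hole when the two squares are horizontal neighbors. When $\SqLoch_k\in N_B(\SqLoch_{k-1})$, the two bottom sequences coincide from $\SqLoch_k$ onward, as you yourself note; the ``region'' they enclose together with $B_S$ is degenerate and does not trap $H_h$ at all. There is therefore no dichotomy on $b_{\SqLoch_{k-1}}$ versus $b_{\SqLoch_k}$ to read off, and your expected conclusion $\SqLoch_{k-1}=\SqLoch_1$ or $\SqLoch_k=\SqLoch_1$ is not forced by anything you have set up. You correctly identify this as the delicate point, but the sketch does not close it.

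A clean way to repair your $N_B$ case is essentially the paper's observation in disguise: once you have established that $P_{k-1,k}$ is the east endpoint of the shared edge (which you do correctly), the clockwise traversal on $\partial\SqLoch_k$ enters on $T_{\SqLoch_k}$ heading east; to reach the lid $\SqLoch_1$ at $P_{k,1}\in L_{\SqLoch_k}$ it must pass through $R_{\SqLoch_k}$ and all of $B_{\SqLoch_k}$, so $\SqLoch_k$ has no bottom neighbor. That is the contradiction---and it is exactly the paper's argument, just approached from the other direction.
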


\begin{proof}
We traverse $\partial H_h$ from $P$ in clockwise order. From the
definition of $\overline{PQ}$ and
Lemma~\ref{lem:directionoftraversal} we know that $P_{k,1}$ is a
point on $L_{\SqLoch_k}$. If $P_{k-1,k} \in L_{\SqLoch_k}$, then
$\SqLoch_{k} \in N_R(\SqLoch_{k-1})$; if $P_{k-1,k} \in
B_{\SqLoch_k}$, then $\SqLoch_{k} \in N_T(\SqLoch_{k-1})$. In any
other case $\SqLoch_k$ does not have a bottom neighbor.
\end{proof}

Following the distinction described in the lemma, we say that a {\bf
hole} is of {\bf Type~I} if $\SqLoch_{k} \in N_R(\SqLoch_{k-1})$,
and of {\bf Type~II} if $\SqLoch_{k} \in N_T(\SqLoch_{k-1})$; see
Fig.~\ref{fig:typeIandtypeII}.

\subsection{Splitting Holes\label{sect:modify}}\index{hole!splitting}
Let $H_h$ be a hole whose boundary does not touch the boundary of
the strip, {\em i.e.,} the hole is completely enclosed by squares.
We define two lines that are essential for the computation of an
upper bound for the area of a hole, $H_h$: The {\em left diagonal},
\index{left diagonal} $D_l^h$, is defined as the straight line with
slope $-1$ starting in $P_{2,3}$ if $P_{2,3} \in R_{\SqLoch_2}$ or,
otherwise, in the lower right corner of $\SqLoch_2$; see
Fig.~\ref{fig:typeIandtypeII}. We denote the point where $D_l^h$
starts by $P'$. The {\em right diagonal},\index{right diagonal}
$D_r^h$, is defined as the line with slope $1$ starting in
$P_{k-1,k}$ if $\SqLoch_{k} \in N_R(\SqLoch_{k-1})$ (Type~I) or in
$P_{k-2,k-1}$, otherwise (Type~II). Note that $P_{k-2,k-1}$ lies on
$L_{\SqLoch_{k-1}}$ because otherwise, there would not be a left
neighbor of $\SqLoch_{k-1}$. We denote the point where $D_r^h$
starts by $Q'$. If $h$ is clear or does not matter we omit the
superscript.

% For $D_r$
% we proceed similarly: Either  $F \in L_{\SqLoch_i}$ and $E \in T_{\SqLoch_i}$
% or $F \in B_{\SqLoch_i}$ and $E \in R_{\SqLoch_i}$ holds. As above, an
% intersection in the lower left corner belongs to $B_{\SqLoch_i}$
% (Fig.~\ref{fig:drintersection}). As a first result for these lines
% we state
% \begin{figure}
% \centering
% \includegraphics[height=3.8cm]{drintersection.eps}
% \caption{$D_r$ can intersect $\SqLoch_i$ in two different ways, but
% $P_{i,i+1}$ is always between $F$ and $E$.
% If $F\in L_{\SqLoch_i}$, then $\SqLoch_{i+1}$ is on top of $\SqLoch_i$;
% otherwise, if $F\in B_{\SqLoch_i}$, then $\SqLoch_{i}$ is on top of
% $\SqLoch_{i-1}$. Both cases lead to a contradiction.}
% \label{fig:drintersection}
% \end{figure}

\begin{lemma}\label{lem:dr}
Let $H_h$ be a hole and $D_r$ its right diagonal. Then, $D_r \cap
H_h^{\circ} = \emptyset$.
\end{lemma}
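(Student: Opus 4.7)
I would prove the lemma by contradiction. Suppose $D_r \cap H_h^{\circ} \neq \emptyset$ and let $X$ be the first point on the ray $D_r$ (starting at $Q'$ and travelling NE along slope $1$) that lies in $H_h^{\circ}$. Because $Q'$ lies on $L_{\SqLoch_k}$ in Type~I (respectively $L_{\SqLoch_{k-1}}$ in Type~II), and $D_r$ has positive slope, the initial segment of $D_r$ emanating from $Q'$ enters the interior of that adjacent boundary square; hence $X$ is strictly past $Q'$. Immediately before $X$ the ray lies inside some boundary square $\SqLoch_i$ of $H_h$, and by the slope-$1$ direction $X$ lies on $T_{\SqLoch_i}$ or $R_{\SqLoch_i}$. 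Beyond $X$ the ray is inside $H_h^{\circ}$ until it re-enters the packing at some point $X'$ on the bottom or left side of another boundary square $\SqLoch_m$ of $H_h$, which exists because $H_h$ is completely enclosed and hence bounded.

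The plan is then to derive a contradiction from the existence of $X'$ via the BottomLeft rule applied to $\SqLoch_m$. The open segment of $D_r$ between $X$ and $X'$, lying entirely in $H_h^{\circ}$, provides a slanted empty channel of positive size just below-left of $X'$. Translating $\SqLoch_m$ by a sufficiently small positive amount SW along $D_r$ (equivalently, down and left by the same amount) produces a candidate final placement for $\SqLoch_m$ with strictly lower $y$-coordinate. One verifies---using the staircase structure of $\partial H_h$ provided by Lemma~\ref{lem:pairsofsquares_1} together with the observation that intermediate consecutive boundary squares are bottom or right neighbors---that this translated placement is non-overlapping and reachable by a monotone-descending, collision-free path from the top of the strip at the moment $\SqLoch_m$ was placed. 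This contradicts BottomLeft's guarantee that each square is placed at the lowest reachable position, and, among lowest positions, the leftmost.

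The main obstacle is the rigorous verification that the alternative, lower placement of $\SqLoch_m$ respects the Tetris and gravity constraints \emph{at the moment} $\SqLoch_m$ was placed, which requires distinguishing whether $X' \in B_{\SqLoch_m}$ or $X' \in L_{\SqLoch_m}$ and carefully identifying the squares that were already present. The key technical point is that any square obstructing the diagonal descent would itself have had to be placed below the alternative position of $\SqLoch_m$, which a recursive application of BottomLeft excludes; this turns the argument into a local bootstrap on the squares surrounding the diagonal channel inside $H_h$. A secondary subtlety is ensuring the chosen $X$ is the \emph{first} re-entry point, so that between $Q'$ and $X$ the ray lies inside a single well-defined boundary square $\SqLoch_i$, which is needed to locate $\SqLoch_m$ on the opposite side of the channel.
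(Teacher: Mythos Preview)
Your argument rests on a misreading of the geometry of $D_r$. The ray starts at $Q'$ on $L_{\SqLoch_k}$ (Type~I) or $L_{\SqLoch_{k-1}}$ (Type~II), but the relevant direction along the slope-$1$ line is \emph{south-west}, not north-east. Heading NE from $Q'$ you immediately enter the interior of $\SqLoch_k$ (resp.\ $\SqLoch_{k-1}$) and, after exiting through its top or right side, continue up and to the right---away from $H_h$. Since $\partial H_h\cap\partial\SqLoch_k$ is a single curve lying on $L_{\SqLoch_k}$ (Lemma~\ref{lem:directionoftraversal}), the hole lies entirely to the left of $\SqLoch_k$; the NE half-line therefore never meets $H_h^\circ$, and the configuration you set up (exit through $T_{\SqLoch_i}$ or $R_{\SqLoch_i}$ into the hole, re-entry at $X'\in B_{\SqLoch_m}\cup L_{\SqLoch_m}$) is vacuous. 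The actual content of the lemma is that the \emph{SW} half-line from $Q'$ also avoids $H_h^\circ$.

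If you switch to the SW direction, the roles invert: the ray would leave some square $\SqLoch_i$ through $B_{\SqLoch_i}$ or $L_{\SqLoch_i}$, cross a putative hole segment, and enter $\SqLoch_m$ through $T_{\SqLoch_m}$ or $R_{\SqLoch_m}$. Now $\SqLoch_m$ sits \emph{below-left} of that segment, so translating it SW moves it away from the hole, not into it, and your contradiction evaporates. The square one would want to push SW is $\SqLoch_i$, but BottomLeft guarantees $\SqLoch_i$ has both a bottom and a left neighbour, so a small diagonal shift is obstructed. Your ``recursive bootstrap'' is too vague to close this gap, and a direct BottomLeft contradiction along these lines does not fall out.

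The paper's proof is a short barrier argument with no contradiction step: take the left sequence ${\cal L}_{\SqLoch_k}$ (Type~I) or ${\cal L}_{\SqLoch_{k-1}}$ (Type~II) and verify by induction that the upper-left corner of each element lies above $D_r$. This uses only that every element is a square and $D_r$ has slope~$1$, so whenever $D_r$ meets an element at all it enters through the right side and exits through the bottom. Consequently the skyline of the left sequence stays above $D_r$ all the way to $L_S$, and since $H_h$ lies on the other side of the skyline, $D_r\cap H_h^\circ=\emptyset$.
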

\begin{proof}
Consider the left sequence, ${\cal L}_{\SqLoch_k} =
(\SqLoch_k=\alpha_1, \alpha_2, \ldots)$ or ${\cal
L}_{\SqLoch_{k-1}}= (\SqLoch_{k-1}=\alpha_1, \alpha_2, \ldots)$, for
$H_h$ being of Type~I or II, respectively. It is easy to show by
induction that the upper left corners of the $\alpha_i$'s lie above
$D_r$: If $D_r$ intersects $\partial\alpha_i$ at all, the first
intersection is on $R_{\alpha_i}$, the second on $B_{\alpha_i}$.
Thus, at least the skyline separates $D_r$ and $H_h$.
% Assume that there is a point $F$ at which $D_r$
% enters $H_h^{\circ}$ for the first time after passing through a square
% $\SqLoch_i$. Let $E$ be the other intersection point of $D_r$ and
% $\partial \SqLoch_i$ (Fig.~\ref{fig:drintersection}). Obviously, $i \notin
% \{1,k\}$ and $F$ is a point on $\partial H_h$. If we move from $F$ in
% clockwise order on $\partial \SqLoch_i$ we reach $P_{i,i+1}$
% (Lemma~\ref{lem:directionoftraversal}). If we reach $E$ before, $F$
% would not be the first intersection.
% Therefore, if $F \in L_{\SqLoch_i} \setminus B_{\SqLoch_i}$ and $E \in T_{\SqLoch_i}$,
% we conclude that
% $P_{i,i+1} \in L_{\SqLoch_i} \cup T_{\SqLoch_i}$, implying $\SqLoch_{i+1} \in N_L(\SqLoch_i)
% \cup N_T(\SqLoch_i)$. Thus, by Lemma~\ref{lem:pairsofsquares_1}
% we have $i=1$ or $i=k$, a contradiction.
% If $F \in B_{\SqLoch_i}$ and $E \in R_{\SqLoch_i}$, then only $\SqLoch_{i+1}\in N_R(\SqLoch_i)$ is
% possible. This implies that the part of $\partial \SqLoch_i$ between $F$
% and $P_{i,i+1}$ (in clockwise order) completely belongs to $\partial
% H_h$ (Lemma \ref{lem:directionoftraversal}). Therefore, $\SqLoch_i$ does not
% have a left neighbor, contradicting the principle of Algorithm~1.
% Note that $F \in B_{\SqLoch_i}$ and  $\SqLoch_{i+1}\in N_T(\SqLoch_i)$
\end{proof}

\begin{figure}[t!]
\psfrag{Aim1t}{$\SqLoch_{i-1}$} %
\psfrag{Ait}{$\SqLoch_{i}$} %
\psfrag{Apt}{$\SqLoch_{p}$} %
\psfrag{Aip1t}{$\SqLoch_{i+1}$} %
\psfrag{Aqt}{$\SqLoch_{q}$} %
\psfrag{LAqt}{$L_{\SqLoch_q}$} %
\psfrag{Dl}{$D_l$} %
\psfrag{H1h}{$H_h^{(1)}$} %
\psfrag{Hstarh}{$H_h^\star$} %
\psfrag{pHh}{$\partial H_h$} %
\psfrag{M}{$M$} %
\psfrag{N}{$N$} %
\psfrag{F}{$F$} %
\psfrag{E}{$E$} %
\psfrag{VN}{$V_N$} %
\psfrag{Pim1i}{$P_{i-1,i}$} %
\psfrag{Piip1}{$P_{i,i+1}$} %
\psfrag{casea}{Case A} %
\psfrag{caseb}{Case B} %
\centering
\includegraphics[height=6.0cm]{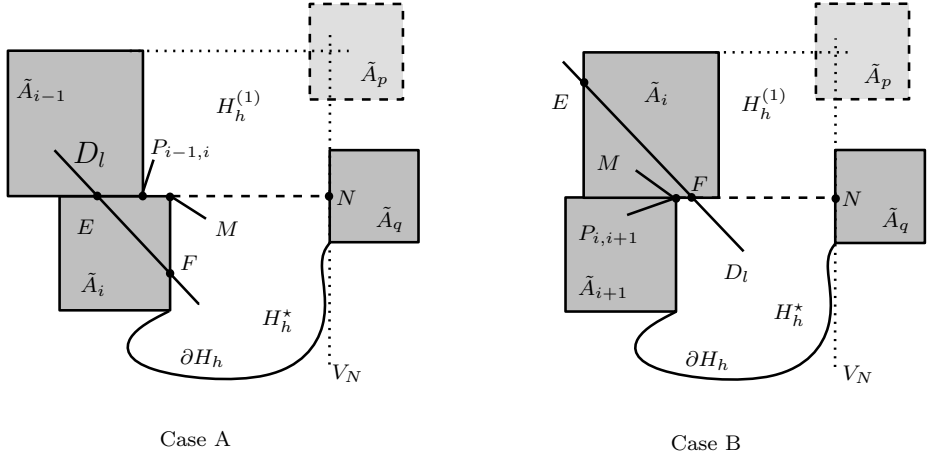}
\caption{$D_l$ can intersect $\SqLoch_i$ (for the second time) in
two different ways: on the right side or on the bottom side. In
Case~A, the square $\SqLoch_{i-1}$ is on top of $\SqLoch_i$; in
Case~B, $\SqLoch_{i}$ is on top of $\SqLoch_{i+1}$.}
\label{fig:dlintersection}
\end{figure}

If Lemma \ref{lem:dr} would also hold for $D_l$, we could use the
polygon formed by $D_l$, $D_r$, and the part of the boundary of
$H_h$ between $Q'$ and $P'$ to bound the area of $H_h$,
but---unfortunately---it does not.

Let $F$ be the first nontrivial intersection point of $\partial H_h$
and $D_l$, while traversing $\partial H_h$ in counterclockwise
order, starting in $P$. $F$ is on the boundary of a square,
$\SqLoch_i$. Let $E$ be the other intersection point of $D_l$ and
$\partial \SqLoch_i$.

It is a simple observation that if $D_l$ intersects a square,
$\SqLoch_i$, in a nontrivial way, {\em i.e.,} in two different
points, $E$ and $F$,
%\footnote{An intersection, $p \in \Gamma\cap
%\Delta$, of a Jordan curve $\Delta$ and a line, ray, or line segment
%$\Gamma$ is called {\em nontrivial}, iff there is a line segment
%$\ell$ of length $\varepsilon >0$ on the line through $\Gamma$ such
%that $p$ is in the interior of $\ell$ and the endpoints of $\ell$
%lie on different sides of $\Delta$.}
then either $F \in R_{\SqLoch_i}$ and $E \in T_{\SqLoch_i}$ or $F
\in B_{\SqLoch_i}$ and $E \in L_{\SqLoch_i}$. To break ties, we
define that an intersection in the lower right corner of $\SqLoch_i$
belongs to $B_{\SqLoch_i}$.
%; see Fig.~\ref{fig:example_bn} and~\ref{fig:dlintersection}.
Now, we split our hole, $H_h$, into two new holes, $H_h^{(1)}$ and
$H_h^\star$. We consider two cases (see
Fig.~\ref{fig:dlintersection}):
\begin{list}{}{}

\item[$\bullet$] Case A: $F \in R_{\SqLoch_i} \setminus
B_{\SqLoch_i}$

\item[$\bullet$] Case B: $F \in B_{\SqLoch_i}$

\end{list}

In Case~A, we define $\SqLoch_{\mathrm{up}} := \SqLoch_{i-1}$ and
$\SqLoch_{\mathrm{low}} := \SqLoch_i$, in Case~B
$\SqLoch_{\mathrm{up}} := \SqLoch_{i}$ and $\SqLoch_{\mathrm{low}}
:= \SqLoch_{i+1}$. Observe the horizontal ray that emanates from the
upper right corner of $\SqLoch_{\mathrm{low}}$ to the right: This
ray is subdivided into supported and unsupported sections. Let
$U=\overline{MN}$ be the leftmost unsupported section with left
endpoint $M$ and right endpoint $N$; see
Fig.~\ref{fig:dlintersection}. Now, we split $H_h$ into two parts,
$H_h^\star$ below $\overline{MN}$ and $H_h^{(1)} := H_h \backslash
H_h^\star$.

% In case A: Let $E_{i}$ be the extension of $T_{\SqLoch_{i}}$ to the right.
% Each point on $E_{i}$ is either supported or unsupported. Let U be
% the first unsupported part of maximal length on $E_{i}$ while
% traversing $E_{i}$ from the upper left corner of $\SqLoch_i$ to the right.
% We define M as the left and N as the right endpoint of U.

% In case B: We define $E_{i+1}$ in an analog way. \SqLochs well as $U$, $M$
% and $N$ with respect to $E_{i+1}$

% In both cases we move back from $F$ on $\partial H_h'$ until we reach
% the lower right corner $Z$ of the above square ($\SqLoch_{i-1}$ or
% $\SqLoch_{i}$) and add $\overline{ZN}$ to $\partial H_h'$. We place a copy
% of $\SqLoch_{i-1}$ (Case A) or $\SqLoch_{i}$ (Case B) on $\overline{MN}$ such
% that the lower right corner is placed in N and denote the newly
% created hole by $H_h^\star$. The traversal is continued in $N$. We
% prove the following invariants for this procedure:

We split $H_h^{(1)}$ into $H_h^{(2)}$ and $H_h^{\star\star}$ etc.,
until there is no further intersection between the boundary of
$H_h^{(z)}$ and $D_l^h$. Because there is a finite number of
intersections, this process will eventually terminate. In the
following, we show that $H_h^{(1)}$ and $H_h^\star$ are indeed two
separate holes, and that $H_h^\star$ has the same properties as an
original one, {\em i.e.,} it is a hole of Type~I or II. Thus, we can
analyze $H_h^\star$ using the same technique, {\em i.e.,} we may
split $H_h^\star$ w.r.t.\ {\em its} left diagonal. We need some
lemmas for this proof:

% \begin{list}{}{}
% \item[$\cdot$] (INV 1): $H_h^\star$ is a hole that either falls into
% Category I or Category II.
% \item[$\cdot$] (INV 2): $\partial H_h'$ does not intersect $D_l$ between $P$ and $N$.
% \end{list}

\begin{lemma} \label{lem:caseAcaseB_1}
Using the above notation we have $\SqLoch_{\mathrm{low}} \in
N_B(\SqLoch_{\mathrm{up}})$.
\end{lemma}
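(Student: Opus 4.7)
The plan is to argue by contradiction in each of the two cases, combining two ingredients: (i) the corollary of Lemma~\ref{lem:pairsofsquares_1} stating that for a consecutive pair $\SqLoch_{j-1},\SqLoch_{j}$ on $\partial H_h$ (in the middle of the sequence) we have $\SqLoch_{j-1}\in N_T(\SqLoch_{j})\cup N_L(\SqLoch_{j})$, so $P_{j-1,j}$ lies on $T_{\SqLoch_{j}}\cup L_{\SqLoch_{j}}$, and symmetrically $P_{j,j+1}$ lies on $B_{\SqLoch_{j}}\cup R_{\SqLoch_{j}}$; and (ii) Lemma~\ref{lem:directionoftraversal}(ii), which translates counterclockwise motion on $\partial H_h$ into clockwise motion on $\partial\SqLoch_{i}$ that visits the sides in the cyclic order $L\to T\to R\to B$.

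For Case~A, I would suppose for contradiction that $\SqLoch_{i-1}\in N_L(\SqLoch_{i})$ instead of the desired $\SqLoch_{i-1}\in N_T(\SqLoch_{i})$. Then $P_{i-1,i}\in L_{\SqLoch_{i}}$, and since $P_{i,i+1}\in B_{\SqLoch_{i}}\cup R_{\SqLoch_{i}}$ is reached strictly after $T$ in the clockwise cycle starting at $L$, the entire top side $T_{\SqLoch_{i}}$ lies on $\partial H_h$. The point $E\in T_{\SqLoch_{i}}$ therefore also lies on $\partial H_h$, but in the counterclockwise order along $\partial H_h$ starting from $P$ it is encountered strictly before $F\in R_{\SqLoch_{i}}$. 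This contradicts the minimality of $F$ as the first nontrivial intersection of $D_l$ with $\partial H_h$, so $\SqLoch_{i-1}\in N_T(\SqLoch_{i})$ and hence $\SqLoch_{\mathrm{low}}=\SqLoch_{i}\in N_B(\SqLoch_{i-1})=N_B(\SqLoch_{\mathrm{up}})$.

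For Case~B, I would suppose for contradiction that $\SqLoch_{i+1}\in N_R(\SqLoch_{i})$ instead of $\SqLoch_{i+1}\in N_B(\SqLoch_{i})$. Then $P_{i,i+1}\in R_{\SqLoch_{i}}$, and together with $P_{i-1,i}\in T_{\SqLoch_{i}}\cup L_{\SqLoch_{i}}$, the clockwise arc on $\partial\SqLoch_{i}$ from $P_{i-1,i}$ to $P_{i,i+1}$ only runs through (parts of) $L$, $T$, and $R$; in particular it never touches $B_{\SqLoch_{i}}$. But $F\in B_{\SqLoch_{i}}$ must lie on this arc since $F\in\partial H_h$, a contradiction. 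Hence $\SqLoch_{i+1}\in N_B(\SqLoch_{i})$, giving $\SqLoch_{\mathrm{low}}=\SqLoch_{i+1}\in N_B(\SqLoch_{i})=N_B(\SqLoch_{\mathrm{up}})$.

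The main technical point I expect to need care with is Case~A: showing that the full side $T_{\SqLoch_{i}}$ actually appears on $\partial H_h$, so that $E$ is genuinely an intersection of $D_l$ with $\partial H_h$. A hypothetical square $\SqLoch_{j}\in N_T(\SqLoch_{i})$ contributing to $\partial H_h$ would cause $\partial\SqLoch_{i}\cap\partial H_h$ to consist of two disjoint arcs (one on either side of $\SqLoch_{j}$), contradicting Lemma~\ref{lem:directionoftraversal}(i); this rules out any such obstruction and makes the argument above rigorous in both cases. Once this observation is in hand, the rest is just a careful reading of the clockwise cycle on $\partial\SqLoch_{i}$.
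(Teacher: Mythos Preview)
Your two case arguments are pleasant, but both rest on a dichotomy that you have not earned. You invoke the corollary after Lemma~\ref{lem:pairsofsquares_1}, namely that $\SqLoch_{j-1}\in N_T(\SqLoch_j)\cup N_L(\SqLoch_j)$ and $\SqLoch_{j+1}\in N_B(\SqLoch_j)\cup N_R(\SqLoch_j)$, and you yourself note this holds only ``in the middle of the sequence'' (concretely for $j\in\{3,\dots,k-1\}$, respectively $j\in\{2,\dots,k-2\}$). Nowhere do you argue that the square $\SqLoch_i$ carrying $F$ actually has an index in this range. In Case~A the possibility $P_{i-1,i}\in R_{\SqLoch_i}$ (i.e.\ $\SqLoch_{i-1}\in N_R(\SqLoch_i)$) is not excluded by your reduction, and in Case~B the possibility $P_{i,i+1}\in L_{\SqLoch_i}$ is likewise left open; your contradiction via ``$E$ is seen before $F$'' or ``$F\in B$ is not on the arc'' simply does not fire in those configurations. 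The ``technical point'' you flag at the end (obstruction by some $\SqLoch_j\in N_T(\SqLoch_i)$) is a non-issue: connectedness of $\partial H_h\cap\partial\SqLoch_i$ is exactly Lemma~\ref{lem:directionoftraversal}(i). The real missing step is the edge-case analysis for $i$ near $1,2$ (and potentially $k-1,k$).

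The paper's proof runs the logic in the opposite order and thereby sidesteps the corollary's index restriction. In Case~A it first uses the minimality of $F$ to pin $P_{i-1,i}$ strictly between $F$ and $E$ along $\partial\SqLoch_i$, hence $P_{i-1,i}\in T_{\SqLoch_i}\cup R_{\SqLoch_i}$; it then applies Lemma~\ref{lem:pairsofsquares_1} itself (not its corollary) to the $R$-option, forcing $\SqLoch_{i-1}=\SqLoch_1$ and $\SqLoch_i=\SqLoch_2$, and finally observes that $D_l$ cannot meet $\partial\SqLoch_2$ nontrivially because of where $D_l$ starts. Case~B is handled analogously, with the first-intersection property yielding $P_{i,i+1}\in B_{\SqLoch_i}\cup L_{\SqLoch_i}$ and Lemma~\ref{lem:pairsofsquares_1} plus the starting point of $D_l$ killing the $L$-option. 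If you want to salvage your route, you must add the observation that $D_l$ meets neither $\SqLoch_1$ nor $\SqLoch_2$ nontrivially (so $i\ge 3$) and separately dispose of $i\in\{k-1,k\}$; without that, the proof is incomplete.
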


\begin{proof}
We consider Case~A and Case~B separately, starting with Case A.
% We denote by $I_{F,E}$ the two line segments
% between $F$ and $E$ on $\partial \SqLoch_i$ orientated in
% counterclockwise order.
We traverse $\partial H_h$ from $F$ in clockwise order. By Lemma
\ref{lem:directionoftraversal}, $\SqLoch_{i-1}$ is the next square
that we reach; see Fig.~\ref{fig:dlintersection}. Because $F$ is the
first
intersection, $P_{i-1,i}$ lies %% on the boundary of $\partial \SqLoch_i$
between $F$ and $E$. %% in counterclockwise order.
Thus, either $P_{i-1,i} \in T_{\SqLoch_i}$ or $P_{i-1,i} \in
R_{\SqLoch_i}$ holds. With Lemma~\ref{lem:pairsofsquares_1}, the
latter implies either $\SqLoch_{i-1}=\SqLoch_1$ or
$\SqLoch_{i}=\SqLoch_1$. Because $b_{\SqLoch_{i-1}} >
b_{\SqLoch_{i}}$ holds, only $\SqLoch_{i-1} = \SqLoch_1$ is
possible, and therefore, $\SqLoch_i = \SqLoch_2$. $D_l$ intersects
$\SqLoch_2$ in the lower left corner---which is not included in this
case---or in $P_{2,3}$. However, $P_{2,3}\in R_{\SqLoch_2}$ cannot
be an intersection, because this would imply $\SqLoch_3 \in
N_R(\SqLoch_2)$. Thus, only $P_{i-1,i} \in T_{\SqLoch_i}$ is
possible.

In Case~B,
% we denote the clockwise orientated line
%segment between $F$ and $E$ on $\partial \SqLoch_i$  by $I_{F,E}$
we traverse $\partial H_h$ from $F$ in counterclockwise order, and
$\SqLoch_{i+1}$ is the next square that we reach. Because $F$ is the
first intersection, it follows that
$P_{i,i+1}$ % \in I_{F,E}$
lies on $\partial \SqLoch_i$ between $F$ and $E$ in clockwise order;
see Fig.~\ref{fig:dlintersection}. Thus, $\SqLoch_{i+1} \in
N_B(\SqLoch_i)$ or $\SqLoch_{i+1} \in N_L(\SqLoch_i)$ holds. If
$\SqLoch_{i+1} \in N_L(\SqLoch_i)$, we have $P_{i,i+1} \in
L_{\SqLoch_i}$. If we move from $P_{i,i+1}$ to $F$ on $\partial
\SqLoch_i$, we move in clockwise order on $\partial H_h$. If we
reach $P_{i-1,i}$ before $F$, the square, $\SqLoch_{i-1}$, is
between $P_{i,i+1}$ and $F$. The points, $P_{i,i+1}$ and $F$, are on
$\partial H_h$, and thus, $\partial H_h \cap \SqLoch_i$ is
disconnected, which contradicts
Lemma~\ref{lem:directionoftraversal}. Thus, we reach $F$ before
$P_{i-1,i}$. Moreover, $\SqLoch_i$ must have a bottom neighbor, and
therefore, $P_{i-1,i} \in
B_{\SqLoch_i}^\circ$. % \setminus (R_{\SqLoch_i}\cap B_{\SqLoch_i})$.
By Lemma \ref{lem:pairsofsquares_1}, we have $\SqLoch_i = \SqLoch_1$
or $\SqLoch_{i+1}=\SqLoch_1$. Both cases contradict the fact that
$D_l$ intersects neither $\SqLoch_2$ in the lower right corner nor
$\SqLoch_1$. Altogether, $P_{i,i+1}$ must be on $B_{\SqLoch_i}$ to
the left of $F$.
%Note that this also implies $M=P_{i,i+1}$ and as $P_{i+1,i+2}$ does
%\TODO{?}
%not coincide with $P_{i,i+1}$ there is a vertical line segment on
%$R_{\SqLoch_{i+1}}$ that starts in $M$ and belongs to $\partial H_h$.
\end{proof}

The last lemma states that in both cases, there are two squares for
which one is indeed placed on top of the other.
% For the further analysis we will
% denote the square that is place on top of the other by $\SqLoch_{j-1}$ and
% the other one by $\SqLoch_{j}$.\TODO{?}
\begin{lemma} \label{lem:caseAcaseB_2}
$M$ is the upper right corner of $\SqLoch_{\mathrm{low}}$.
\end{lemma}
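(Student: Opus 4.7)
The plan is to identify, in each case, the upper right corner $C$ of $\SqLoch_{\mathrm{low}}$ as the point on the horizontal ray where \emph{supported} flips to \emph{unsupported}; since the ray starts exactly at $C$, this automatically yields $M=C$. Concretely, I will verify that (a) $C\in\partial H_h$, (b) just to the right of $C$ a short vertical drop lies in $H_h$, and (c) at $C$ itself the vertical drop lies inside $\SqLoch_{\mathrm{low}}$.

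For Case~A ($\SqLoch_{\mathrm{low}}=\SqLoch_i$, $F\in R_{\SqLoch_i}\setminus B_{\SqLoch_i}$), I use that $P_{i-1,i}\in T_{\SqLoch_i}$ (from the proof of Lemma~\ref{lem:caseAcaseB_1}). By Lemma~\ref{lem:directionoftraversal}, $\partial H_h \cap \partial \SqLoch_i$ is a single arc, traversed clockwise on $\partial\SqLoch_i$. One endpoint of that arc lies on $T_{\SqLoch_i}$ and the arc also contains $F\in R_{\SqLoch_i}$, so on $\partial\SqLoch_i$ the arc has to run rightward along the top and then down the right side, necessarily sweeping over the upper right corner $C$. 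Consequently the hole lies above $T_{\SqLoch_i}$ between $P_{i-1,i}$ and $C$, and to the right of $R_{\SqLoch_i}$ between $C$ and $F$, so a small vertical drop from a point immediately right of $C$ stays in $H_h$ while $\SqLoch_i$ supports $C$ itself. Hence $M=C$.

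For Case~B ($\SqLoch_{\mathrm{low}}=\SqLoch_{i+1}$, $F\in B_{\SqLoch_i}$), the extra work is to show $P_{i,i+1}=C$. From the proof of Lemma~\ref{lem:caseAcaseB_1}, $P_{i,i+1}\in B_{\SqLoch_i}$ and lies strictly to the left of $F$. Traversing $\partial\SqLoch_i$ clockwise from $F$ to $P_{i,i+1}$ sweeps the open segment of $B_{\SqLoch_i}$ between them, and because $\partial H_h$ is traversed counterclockwise with the hole on the left, the hole sits directly beneath this segment. Therefore no square may occupy the thin strip just below $B_{\SqLoch_i}$ in that range; in particular $\SqLoch_{i+1}$ cannot extend to the right of $x_{P_{i,i+1}}$. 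Combined with $P_{i,i+1}\in T_{\SqLoch_{i+1}}$ (so $x_{P_{i,i+1}}\le r_{\SqLoch_{i+1}}$), this forces $x_{P_{i,i+1}}=r_{\SqLoch_{i+1}}$, i.e. $P_{i,i+1}=C$. Then for small $\varepsilon>0$, the point $(r_{\SqLoch_{i+1}}+\varepsilon,\,t_{\SqLoch_{i+1}})$ lies on the open segment between $P_{i,i+1}$ and $F$ with $H_h$ directly below, while below $C$ itself lies $\SqLoch_{i+1}$, so once more $M=C$.

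The main obstacle is the Case~B identification $P_{i,i+1}=C$: one has to rule out the configuration in which $\SqLoch_{\mathrm{low}}$ extends further right than $P_{i,i+1}$, which would push $C$ rightward of $P_{i,i+1}$ and let a square support the initial portion of the ray. The leverage comes entirely from reading off Lemma~\ref{lem:directionoftraversal} and the orientation of the traversal to conclude that the open segment of $B_{\SqLoch_{\mathrm{up}}}$ between $P_{i,i+1}$ and $F$ is on $\partial H_h$ with the hole beneath, which geometrically forbids any square — in particular $\SqLoch_{\mathrm{low}}$ — from occupying that strip.
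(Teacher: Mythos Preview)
Your proof is correct and follows essentially the same approach as the paper's: in both cases you locate the upper right corner of $\SqLoch_{\mathrm{low}}$ on $\partial H_h$, observe that a vertical segment of positive length on $\partial H_h$ lies just below it with the hole to the right, and conclude that the ray is unsupported immediately to the right of that corner. Your treatment of Case~B is more explicit than the paper's (which simply says ``a similar argument holds''): you spell out why $P_{i,i+1}$ must coincide with the upper right corner of $\SqLoch_{i+1}$, using that the open segment of $B_{\SqLoch_i}$ between $P_{i,i+1}$ and $F$ borders the hole from above and hence cannot have $\SqLoch_{i+1}$ underneath it.
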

\begin{proof}
Case~A: We know $F\in R_{\SqLoch_i}$ and $P_{i-1,i} \in
T_{\SqLoch_i}$. By Lemma~\ref{lem:directionoftraversal}, the upper
right corner, $M'$, of $\SqLoch_i$ belongs to $\partial H_h$.
Because $F$ does not coincide with $M'$ (degenerate intersection),
$\overline{FM'}$ is a vertical line of positive length. Hence, $M'$
is the beginning of an unsupported section of the horizontal ray
emanating from $M'$ to the right. Thus, the first unsupported
section starts in $M'$; that is, $M=M'$. A similar argument holds in
Case~B.
\end{proof}

To ensure that $H_h^\star$ is well defined, we show that it has a
closed boundary. Obviously, $\overline{MN}$ and the part of
$\partial H_h$ counterclockwise from $M$ to $N$ forms a closed
curve. We place an imaginary copy of $\SqLoch_{\mathrm{up}}$ on
$\overline{MN}$, such that the lower right corner is placed in $N$.
We call the copy the {\em virtual lid}\index{virtual lid}, denoted
by $\SqLoch_{\mathrm{up}}'$. We show that $\overline{MN} <
\SqLochklein_{\mathrm{up}}$ holds, where
$\SqLochklein_{\mathrm{up}}$ denotes the side length of
$\SqLoch_{\mathrm{up}}$. Thus, $\overline{MN}$ is completely covered
by the virtual copy of $\SqLoch_{\mathrm{up}}$, and in turn, we can
choose the virtual block as a new lid for $H_h^\star$.

\begin{lemma}\label{lem:sizeofMN}
With the above notation we have $\overline{MN} <
\SqLochklein_{\mathrm{up}}$.
\end{lemma}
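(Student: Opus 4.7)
The plan is to prove the lemma by contradiction. Assume $|\overline{MN}|\ge \SqLochklein_{\mathrm{up}}$; I will exhibit a feasible placement for $\SqLoch_{\mathrm{up}}$ at the moment it arrived whose $y$-coordinate is strictly below $t_{\SqLoch_{\mathrm{low}}}$, contradicting BottomLeft's rule of picking the bottommost feasible position.

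By Lemma~\ref{lem:caseAcaseB_2}, the point $M$ is the upper-right corner of $\SqLoch_{\mathrm{low}}$, so $\overline{MN}$ sits at height $t_{\SqLoch_{\mathrm{low}}}$ immediately to the right of $\SqLoch_{\mathrm{low}}$ with its interior lying in $H_h^\circ$. Consider the translate of $\SqLoch_{\mathrm{up}}$ whose lower-left corner coincides with $M$. Under the assumption, the horizontal extent of this translate is contained in that of $\overline{MN}$, so the interior of its bottom edge lies inside $H_h^\circ$ and is unsupported. The gravity constraint then prevents this translate from resting at height $t_{\SqLoch_{\mathrm{low}}}$, forcing it to descend strictly further down into $H_h$.

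It remains to argue that at the instant $\SqLoch_{\mathrm{up}}$ was inserted, the corresponding lower placement was reachable by a collision-free downward path. This uses the following timing observation: any square in the final packing whose presence would block a vertical corridor of width $\SqLochklein_{\mathrm{up}}$ descending from the top of the strip through $M$ into $H_h$ must have been placed no earlier than $\SqLoch_{\mathrm{up}}$ itself, since otherwise it would also have obstructed the actual path $\SqLoch_{\mathrm{up}}$ took to its current resting place on $\SqLoch_{\mathrm{low}}$. Hence when $\SqLoch_{\mathrm{up}}$ arrived the corridor above $M$ was clear, so $\SqLoch_{\mathrm{up}}$ could have been dropped through it to a position with $y$-coordinate strictly below $t_{\SqLoch_{\mathrm{low}}}$, contradicting BottomLeft's choice.

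The hardest part will be making this timing/accessibility argument rigorous: one must rule out the possibility of a subtle geometric obstruction, narrower than $\SqLochklein_{\mathrm{up}}$, that was already present at the time of insertion and does not block the original descent of $\SqLoch_{\mathrm{up}}$ but does block the shifted descent through $M$. I expect this to be handled by a careful local analysis around $M$, distinguishing Cases~A and~B and exploiting the existence of the left and bottom sequences of the relevant squares to guarantee a clear collision-free corridor down to $\overline{MN}$.
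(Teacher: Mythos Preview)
Your overall contradiction strategy is right and matches the paper: if $|\overline{MN}|\ge \SqLochklein_{\mathrm{up}}$, then $\SqLoch_{\mathrm{up}}$ could have been placed strictly lower, contradicting BottomLeft. The gap you flag, however, is real, and your ``timing observation'' does not hold as stated. A square $\SqLoch_p$ placed before $\SqLoch_{\mathrm{up}}$ can perfectly well block the vertical corridor of width $\SqLochklein_{\mathrm{up}}$ above $M$ while leaving $\SqLoch_{\mathrm{up}}$'s actual descent unobstructed: $\SqLoch_{\mathrm{up}}$'s path may have come down elsewhere and then slid horizontally onto $T_{\SqLoch_{\mathrm{low}}}$, never passing through that corridor. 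So the reduction to ``later than $\SqLoch_{\mathrm{up}}$'' fails, and the promised local analysis around $M$ cannot rescue it without a new idea.

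The paper avoids this by never invoking a fresh vertical descent. Instead it argues that at placement time $\SqLoch_{\mathrm{up}}$ could be \emph{slid horizontally to the right} along $\overline{MN}$ from its actual (hence reachable) position until its lower-right corner coincides with $N$; since $\overline{MN}$ is unsupported, it would then drop. The work goes into ruling out a blocker $\SqLoch_p$ of this slide. Such a $\SqLoch_p$ satisfies $l_{\SqLoch_p}<v_N$ and $b_{\SqLoch_p}<t_{\SqLoch_{\mathrm{up}}}$, and one looks at its left sequence ${\cal L}_{\SqLoch_p}$ (all of whose members were already present). The associated skyline ${\cal S}_{\SqLoch_p}$ must cross the horizontal line at height $t_{\SqLoch_{\mathrm{low}}}$; it cannot do so in $\overline{MN}$ (that segment is unsupported by definition) nor between $P_{\mathrm{up},\mathrm{low}}$ and $M$ (that is $T_{\SqLoch_{\mathrm{low}}}$ itself). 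Hence the skyline reaches to the left of $P_{\mathrm{up},\mathrm{low}}$ or to $L_S$, which forces $\SqLoch_{\mathrm{up}}$'s actual path to have gone around $\SqLoch_p$ on the right and below and then moved \emph{upward} to its final position---contradicting the gravity constraint. This skyline argument is the missing ingredient, and it removes the need for any Case~A/Case~B split at this point.
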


\begin{proof}
We show that at the time $\SqLoch_{\mathrm{up}}$ is packed by
BottomLeft, it can be moved to the right along $\overline{MN}$, such
that the lower right corner coincides with $N$. Since
$\overline{MN}$ is unsupported, $\overline{MN} \geq
\SqLochklein_{\mathrm{up}}$ implies that there would have been a
position for $\SqLoch_{\mathrm{up}}$ that is closer to the bottom of
$S$ than its current position.

Let $V_N$ be the vertical line passing through the point $N$, and
let $v_N$ be its $x$-coordinate. Assume that there is a square,
$\SqLoch_p$, that prevents $\SqLoch_{\mathrm{up}}$ from being moved.
Then, $\SqLoch_p$ fulfills $l_{\SqLoch_p} < v_N$ and $b_{\SqLoch_p}
< t_{\SqLoch_{\mathrm{up}}}$ $(*)$; see
Fig.~\ref{fig:dlintersection}. Now, consider the sequence ${\cal
L}_{\SqLoch_p}$, and note that all squares in ${\cal L}_{\SqLoch_p}$
are placed before $\SqLoch_{\mathrm{up}}$. From $(*)$ we conclude
that the skyline, ${\cal S}_{\SqLoch_p}$, may intersect the
horizontal line passing through $T_{\SqLoch_{\mathrm{low}}}$ only to
the left of $v_N$. If the skyline intersects or touches in
$\overline{MN}$, we have a contradiction to the choice of $M$ and
$N$ as endpoints of the first unsupported section. An intersection
between $M$ and $P_{\mathrm{up},{\mathrm{low}}}$ is not possible,
because this part completely belongs to
$T_{\SqLoch_{\mathrm{low}}}$. Therefore, ${\cal S}_{\SqLoch_p}$
either intersects the horizontal line to the left of
$P_{\mathrm{up},{\mathrm{low}}}$ or it reaches $L_S$ before. This
implies that $\SqLoch_{\mathrm{up}}$ must pass $\SqLoch_p$ on the
right side and at the bottom side to get to its final position. In
particular, $b_{\SqLoch_p} < t_{\SqLoch_{\mathrm{up}}}$ implies that
$\SqLoch_{\mathrm{up}}$'s path must go upwards to reach its final
position; such a path contradicts the choice of BottomLeft.
\end{proof}

Using the preceding lemmas, we can prove the following:

\begin{corollary}\label{cor:hstar}
Let $H_h^\star$ and $\SqLoch'_{\mathrm{up}}$ be defined as above.
$H_h^\star$ is a hole of Type~I or Type~II with virtual lid
$\SqLoch'_{\mathrm{up}}$.
%If it is of Type~II then $N \in L_{\SqLoch_k}$.
\end{corollary}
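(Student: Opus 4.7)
The plan is to verify three assertions in turn: (i) $H_h^\star$ is a connected region of finite area with a closed boundary, (ii) the virtual lid $\SqLoch'_{\mathrm{up}}$ indeed closes that boundary, and (iii) the boundary squares of $H_h^\star$ fit the pattern required for a Type~I or Type~II hole.

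For (i) and (ii), the plan is to use the fact that, by construction, $H_h^\star$ is the part of the original hole $H_h$ strictly below the horizontal chord $\overline{MN}$. Its boundary is the portion of $\partial H_h$ traversed counterclockwise from $M$ to $N$, together with $\overline{MN}$ itself. By Lemma~\ref{lem:sizeofMN}, $|\overline{MN}| < \SqLochklein_{\mathrm{up}}$, so placing $\SqLoch'_{\mathrm{up}}$ with its lower right corner at $N$ makes $\overline{MN}$ a proper subsegment of the lid's bottom side. Hence the virtual lid closes $H_h^\star$ from above exactly as a real square would, and $H_h^\star$ is a bona fide hole in the augmented packing.

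For (iii), I would enumerate the boundary squares of $H_h^\star$ in counterclockwise order starting with $\SqLoch'_{\mathrm{up}}$. The next square is $\SqLoch_{\mathrm{low}}$: by Lemma~\ref{lem:caseAcaseB_2}, $M$ is its upper right corner, so the counterclockwise traversal leaves the bottom of the lid at $M$ and descends along the right side of $\SqLoch_{\mathrm{low}}$. A short interval comparison using the lid's placement together with $|\overline{MN}| < \SqLochklein_{\mathrm{up}}$ shows that the bottom of $\SqLoch'_{\mathrm{up}}$ and the top of $\SqLoch_{\mathrm{low}}$ overlap on a segment of positive length, so $\SqLoch_{\mathrm{low}} \in N_B(\SqLoch'_{\mathrm{up}})$, mirroring Lemma~\ref{lem:caseAcaseB_1}. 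The remaining boundary squares of $H_h^\star$ form a contiguous counterclockwise sub-chain of the boundary squares of $H_h$, so all their pairwise adjacencies are inherited unchanged, and together with the lid they satisfy the structural restrictions of Lemma~\ref{lem:pairsofsquares_1}.

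To finish the classification as Type~I or Type~II, I would look at the last square in this enumeration, namely the one on whose left side $N$ lies. Depending on whether $N$ is in the interior of that side or at its upper corner, the preceding square along $\partial H_h^\star$ is either a left neighbor or a bottom neighbor, giving exactly the Type~I / Type~II dichotomy. The main subtlety I anticipate is the bookkeeping around the virtual lid: one has to check carefully that the adjacency between $\SqLoch'_{\mathrm{up}}$ and $\SqLoch_{\mathrm{low}}$, though realized by an imaginary copy rather than an algorithm-placed square, is genuinely interchangeable with a real neighbor relation for the purposes of applying the structural lemmas (and, eventually, the splitting procedure) recursively to $H_h^\star$.
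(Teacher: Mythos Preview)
Your overall approach mirrors the paper's: verify that $H_h^\star$ has a closed boundary, use Lemma~\ref{lem:sizeofMN} to show the virtual lid covers $\overline{MN}$, and then analyze the boundary structure near $N$ to classify the hole. The one genuine gap is your Type~I/Type~II criterion. You propose to distinguish the two types by whether $N$ lies in the interior of $L_{\SqLoch_q}$ or at its upper corner, but this is not the right invariant. The position of $N$ on $L_{\SqLoch_q}$ is fixed by the height of $T_{\SqLoch_{\mathrm{low}}}$ and carries no information about how $\SqLoch_q$ meets $\SqLoch_{q-1}$. In particular, $N$ can sit strictly in the interior of $L_{\SqLoch_q}$ while $\SqLoch_{q-1}$ is a bottom neighbor of $\SqLoch_q$ (so $\SqLoch_q\in N_T(\SqLoch_{q-1})$), which is Type~II, not Type~I as your rule would predict.

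The correct criterion, and the one the paper uses, is to follow $\partial H_h$ clockwise from $N$ down along $L_{\SqLoch_q}$ (this segment has positive length because the interior of $\overline{MN}$ is unsupported while $N$ itself is supported) until you reach the transition point $P_{q-1,q}$. If $P_{q-1,q}\in L_{\SqLoch_q}$ then $\SqLoch_q\in N_R(\SqLoch_{q-1})$ and $H_h^\star$ is of Type~I; if $P_{q-1,q}\in B_{\SqLoch_q}$ then $\SqLoch_q\in N_T(\SqLoch_{q-1})$ and it is of Type~II. Any other location for $P_{q-1,q}$ is impossible because $\SqLoch_q$ must have a bottom neighbor. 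With this correction the rest of your argument goes through essentially as in the paper.
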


\begin{proof}
$H_h^\star$ has a closed boundary, and there is at least a small
area below $\overline{MN}$ in which no squares are placed. Hence,
$H_h^\star$ is a hole. Using the arguments that the interior of
$\overline{MN}$ is unsupported and that $N$ is supported and lies on
$L_{\SqLoch_q}$, for some $1 \leq q \leq k$, we conclude that there
is a vertical line of positive length below $N$ on $\partial
\SqLoch_q$ that belongs to $\partial H_h$. If we move from $N$ on
$\partial \SqLoch_q$ in counterclockwise order, we move on $\partial
H_h$ in clockwise order and reach $\SqLoch_{q-1}$ next. If
$P_{q-1,q} \in L_{\SqLoch_q}$, then $H_h^\star$ is of Type~I. If
$P_{q-1,q} \in B_{\SqLoch_q}$, then it is of Type~II. $P_{q-1,q}
\notin L_{\SqLoch_q} \cup B_{\SqLoch_q}$ yields a contradiction,
because in this case there is no bottom neighbor for $\SqLoch_q$.
$\SqLoch'_{\mathrm{up}}$ is the unique lid by the existence of the
sequences ${\cal B}_{\SqLoch_q}$ and ${\cal
B}_{\SqLoch_{\mathrm{low}}}$.
%$S^{\SqLoch_q}_B$ and
%$S^{\SqLoch_{\mathrm{low}}}_B$.
\end{proof}

Note that the preceding lemmas also hold for the holes
$H_h^{(...)}$, $H_h^{\star\star}$, $H_h^{\star\star\star}$, and so
on.

\begin{lemma}
For every square, $\SqOrig_i$, there is at most one copy of
$\SqOrig_i$.
\end{lemma}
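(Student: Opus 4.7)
The plan is to reduce the claim to a single geometric key fact: whenever $\SqOrig_i$ plays the role of $\SqLoch_{\mathrm{up}}$ in a splitting, the corresponding bottom neighbor $\SqLoch_{\mathrm{low}}$ supplied by Lemma~\ref{lem:caseAcaseB_1} satisfies $r_{\SqLoch_{\mathrm{low}}} > r_{\SqOrig_i}$. I would establish this by tracing $\partial H_h$ around the corner $M$. In Case~A, the traversal leaves $\SqOrig_i = \SqLoch_{i-1}$ at $P_{i-1,i} \in T_{\SqLoch_i}$ and runs clockwise on $\partial \SqLoch_i$ rightward along $T_{\SqLoch_i}$ to the upper-right corner $M$; for $M$ to lie on $\partial H_h$ rather than be buried under $\SqOrig_i$ from above, one needs $r_{\SqLoch_i} > r_{\SqOrig_i}$. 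Case~B is analogous: after exiting $\SqOrig_i = \SqLoch_i$ at $P_{i,i+1} \in B_{\SqLoch_i}^{\circ}$, the traversal runs rightward along $T_{\SqLoch_{i+1}}$ until reaching $M$, which is exposed to $H_h$ from above only if $r_{\SqLoch_{i+1}} > r_{\SqLoch_i} = r_{\SqOrig_i}$.

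Granted the key fact, I would derive the contradiction as follows. Assume for contradiction that two distinct splittings produce two virtual copies of $\SqOrig_i$, with associated bottom neighbors $\SqLoch_{\mathrm{low}}^{1}, \SqLoch_{\mathrm{low}}^{2} \in N_B(\SqOrig_i)$. Each bottom neighbor meets $B_{\SqOrig_i}$ in positive length, so $l_{\SqLoch_{\mathrm{low}}^{j}} < r_{\SqOrig_i}$, while the key fact forces $r_{\SqLoch_{\mathrm{low}}^{j}} > r_{\SqOrig_i}$. Both tops lie on the horizontal line $y = b_{\SqOrig_i}$; hence the two squares' interiors live in the half-plane $y < b_{\SqOrig_i}$, and by non-overlap of packed squares their $x$-ranges are disjoint. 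Taking WLOG $r_{\SqLoch_{\mathrm{low}}^{1}} \leq l_{\SqLoch_{\mathrm{low}}^{2}}$, the chain $r_{\SqOrig_i} < r_{\SqLoch_{\mathrm{low}}^{1}} \leq l_{\SqLoch_{\mathrm{low}}^{2}} < r_{\SqOrig_i}$ is contradictory. If instead $\SqLoch_{\mathrm{low}}^{1} = \SqLoch_{\mathrm{low}}^{2}$, then the whole splitting data ($M$, $\overline{MN}$, the resulting subhole, and the copy's position) is determined by the pair $(\SqOrig_i, \SqLoch_{\mathrm{low}})$, so the two supposedly distinct copies must in fact coincide; the uniqueness of the subhole uses only that empty space belongs to a single maximal component.

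The main obstacle is the clean verification of the key fact in Case~B, where one must carefully exclude the possibility $r_{\SqLoch_{i+1}} \leq r_{\SqLoch_i}$: in that degenerate situation the portion of $T_{\SqLoch_{i+1}}$ to the right of $P_{i,i+1}$ would be glued to $B_{\SqOrig_i}$ from below rather than exposed to $H_h$, and the traversal argument that places $M$ on $\partial H_h$ would break. Once this is settled by the same bookkeeping already developed in Lemmas~\ref{lem:caseAcaseB_1} and~\ref{lem:caseAcaseB_2} (in particular, by using that $\overline{FM}$ has positive length and $F$ lies strictly inside the corresponding side), the rest of the proof is the short non-overlap argument above.
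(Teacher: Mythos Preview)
Your approach is substantially more elaborate than the paper's, and the central ``key fact'' ($r_{\SqLoch_{\mathrm{low}}} > r_{\SqOrig_i}$) is not established by the argument you give. Your justification is that the upper-right corner $M$ of $\SqLoch_{\mathrm{low}}$ can lie on $\partial H_h$ only if $\SqOrig_i$ does not cover it from above. That is not correct: $M$ belongs to $\partial H_h$ simply because the curve $\partial H_h \cap \partial\SqLoch_{\mathrm{low}}$ is connected and contains both $F$ (on $R_{\SqLoch_{\mathrm{low}}}$ in Case~A) and the entry point $P_{i-1,i}$ (on $T_{\SqLoch_{\mathrm{low}}}$). Nothing in that argument prevents $r_{\SqOrig_i}\ge r_{\SqLoch_{\mathrm{low}}}$, in which case $P_{i-1,i}$ coincides with $M$ and the traversal turns straight down $R_{\SqLoch_{\mathrm{low}}}$ without any rightward motion along $T_{\SqLoch_{\mathrm{low}}}$; the hole then touches $M$ from the lower-right quadrant only, which is perfectly compatible with $\SqOrig_i$ sitting directly above $M$. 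The analogous degeneration occurs in Case~B when $P_{i,i+1}$ equals the upper-right corner of $\SqLoch_{i+1}$. You flag exactly this degeneracy in your last paragraph, but deferring it to ``the same bookkeeping already developed in Lemmas~\ref{lem:caseAcaseB_1} and~\ref{lem:caseAcaseB_2}'' does not close the gap: those lemmas establish $\SqLoch_{\mathrm{low}}\in N_B(\SqLoch_{\mathrm{up}})$ and identify $M$, but say nothing about the relative order of $r_{\SqLoch_{\mathrm{low}}}$ and $r_{\SqLoch_{\mathrm{up}}}$.

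For comparison, the paper's proof is two lines and avoids this issue entirely. It observes that whenever $\SqOrig_i$ serves as $\SqLoch_{\mathrm{up}}$, its \emph{lower-right corner} lies on the boundary of the hole being split (this drops out of the clockwise traversal used in Lemmas~\ref{lem:caseAcaseB_1}--\ref{lem:caseAcaseB_2}: in both Case~A and Case~B the curve $\partial H_h\cap\partial\SqLoch_{\mathrm{up}}$ passes through that corner). Since holes are pairwise disjoint, a single point can lie on the boundary of at most one of them, so the hole---and hence the copy---is uniquely determined by $\SqOrig_i$. This sidesteps any comparison of right edges and needs no case analysis on whether $\SqLoch_{\mathrm{low}}$ protrudes past $\SqOrig_i$.
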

\begin{proof}
A square, $\SqOrig_i$, is used as a virtual lid, only if its lower
right corner is on the boundary of the hole that is split. Because
its corner can be on the boundary of at most one hole, there is only
one hole with virtual lid $\SqOrig_i$.
%if and only if the rightmost, connected, unsupported section of its
%bottom side contains $\SqLoch_i$'s lower right corner. Because this
%section can belong to only one hole, there is only one hole with
%$\SqLoch_i$ as virtual lid.
\end{proof}

%% \begin{corollary}\label{cor:dl}
%There is no intersection of $D_l$ and $\partial H_h'$ in the part
%\begin{corollary} \label{cor:nointersectionDl}
%There is no intersection of $D_l$ and $\delta H_h'$ in the part
%between $P$ and $N$.
%\end{corollary}
%\begin{proof}
%There is no intersection of $D_l$ and $\partial H_h'$ between $P$ and
%$Z$ and there is no on the line segment $\overline{ZN}$.
%\end{proof}
% The above procedure is obviously finite as the lid of a hole is
% never chosen to be the lid of the newly created hole, i.e., there is at
% least one hole less contributing to the boundary.

\subsection{Computing the Area of a Hole\label{sect:holesize}}
\index{hole!computing the area} In this section we show how to
compute the area of a hole. In the preceding section we eliminated
all intersections of $D_l^h$ with the boundary of the hole,
$H_h^{(z)}$, by splitting the hole. Thus, we assume that we have a
set of holes, $\hat{H}_h$, $h=1,\ldots, s^\star$, that fulfill
$\partial \hat{H}_h \cap D_l^h=\emptyset$ and have either a
non-virtual or a virtual lid.

Our aim is to bound $|\hat{H}_h|$ by the areas of the squares that
contribute to $\partial \hat{H}_h$. A square, $\SqOrig_i$, may
contribute to more than one hole. Therefore, it is to expensive to
use its total area, $a_i^2$, in the bound for a single hole.
Instead, we charge only fractions of $a_i^2$ per hole. Moreover, we
charge every edge of $\SqOrig_i$ separately. By
Lemma~\ref{lem:directionoftraversal}, $\partial \hat{H}_h \cap
\partial \SqOrig_i$ is connected. In particular, every side of
$\SqOrig_i$ contributes at most one (connected) line segment to
$\partial \hat{H}_h$. For the left (bottom, right) side of a square,
$\SqOrig_i$, we denote the length of the line segment contributed to
$\partial\hat{H}_h$ by $\lambda_i^h$ ($\beta_i^h$, $\rho_i^h$;
respectively). If a side of a square does not contribute to a hole,
the corresponding length of the line segment is defined to be zero.

% Instead, we use only the area of a
% square with side length $\lambda_L^\hat{H}_h\cdot a_i$, $\lambda_L^\hat{H}_h \in
% [0,1]$, where the line $\lambda_L^\hat{H}_h \SqOrig_i$ is a part of
% $L\in\{L_{\SqOrig_i}, B_{\SqOrig_i}, R_{\SqOrig_i}\}$ {\em and} of $\partial \hat{H}_h$.
Let $c_{h,i}^{\{\lambda,\beta,\rho\}}$ be appropriate coefficients,
such that the area of a hole can be charged against the area of the
adjacent squares, {\em i.e.,}
\begin{equation*}
|\hat{H}_h| \leq \sum_{i=1}^{n+1} c_{h,i}^\lambda (\lambda_i^h)^2 +
c_{h,i}^\beta (\beta_i^h)^2 + c_{h,i}^\rho (\rho_i^h)^2\,.
\end{equation*}

As each point on $\partial \SqOrig_i$ is---obviously---on the
boundary of at most one hole, the line segments are pairwise
disjoint. Thus, for the left side of $\SqOrig_i$, the two squares
inside $\SqOrig_i$ induced by the line segments, $\lambda_i^h$ and
$\lambda_i^g$, of two different holes, $\hat{H}_h$ and $\hat{H}_g$,
do not overlap. % for $j\neq k$.
Therefore, we obtain
\begin{equation*}
\sum_{h=1}^{s^\star}  %\hat{H}_h \mbox{:}\partial \hat{H}_h \cap L \neq \emptyset}
c_{h,i}^\lambda \cdot (\lambda_i^h)^2 \leq c_i^\lambda \cdot a_i^2\;
,
\end{equation*}
where $c_i^\lambda$ is the maximum of the $c_{h,i}^\lambda$'s taken
over all holes $\hat{H}_h$. We call $c_i^\lambda$ the {\em charge of
$L_{\SqOrig_i}$} and define $c_i^\beta$ and $c_i^\rho$ analogously.

We use virtual copies of some squares as lids. However, for every
square, $\SqOrig_i$, there is at most one copy, $\SqOrig_i'$. We
denote the line segments and charges corresponding to $\SqOrig_i'$
by $\lambda_{i'}^h$, $c_{h,i'}^\lambda$, and so on. Taking the
charges to the copy into account, the {\em total charge of
$\SqOrig_i$} is given by

\begin{equation*}
c_i = c_i^\lambda + c_i^\beta + c_i^\rho + c_{i'}^\lambda +
c_{i'}^\beta + c_{i'}^\rho\,.
\end{equation*}
Altogether, we bound the total area of the holes by
\begin{equation*}
\sum_{h=1}^{s^\star} |\hat{H}_h| \leq \sum_{i=1}^{n+1} c_i\cdot
a_i^2 \leq \sum_{i=1}^{n+1} c\cdot a_i^2\, ,
\end{equation*}
where $c=\max_{i=1,\ldots,n} \{c_i\}$. In the following, we want to
find an upper bound on $c$.
%on $c_i^\lambda$, $c_i^\beta$, and $c_i^\rho$.

\paragraph{Holes with a Non-Virtual Lid\label{sect:nonvirtual}}
We know that each hole is either of Type~I or~II. Moreover, we
removed all intersections of $\hat{H}_h$ with its diagonal,~$D_l^h$.
Therefore, $\hat{H}_h$ lies completely inside the polygon formed by
$D_l^h$, $D_r^h$, and the part of $\partial \hat{H}_h$ that is
clockwise between $P'$ and $Q'$; see Fig.~\ref{fig:typeIandtypeII}.

\begin{figure}
\psfrag{Q}{$Q$} %
\psfrag{Qprime}{$Q'$} %
\psfrag{P}{$P$} %
\psfrag{Pprime}{$P'$} %
\psfrag{ti}{Type~I} %
\psfrag{tii}{Type~II} %
\psfrag{A1t}{$\SqLoch_1$} %
\psfrag{A2t}{$\SqLoch_2$} %
\psfrag{A3t}{$\SqLoch_3$} %
\psfrag{Akt}{$\SqLoch_k$} %
\psfrag{Akm1t}{$\SqLoch_{k-1}$} %
\psfrag{Akm2t}{$\SqLoch_{k-2}$} %
\psfrag{Dl}{$D_l$} %
\psfrag{Dr}{$D_r$} %
\psfrag{R1}{$R_1$} %
\psfrag{R2}{$R_2$} %
\psfrag{Delta1}{$\Delta_1$} %
\psfrag{Delta2}{$\Delta_2$} %
\psfrag{a}{$V_Q$} %
\psfrag{rho2h}{$\rho_2^h$} %
\psfrag{x}{$\beta_1^h$} %
\psfrag{y}{$\beta_k^h$} %
%\psfrag{betakh}{$\beta_k^h$} %
%\psfrag{lambdakm1h}{$\lambda_{k-1}^h$} %
\psfrag{z}{$\lambda_{k-1}^h$} %
\psfrag{Pkm1k}{$P_{k-1,k}$} %
%\psfrag{}{} %
%\psfrag{}{} %
%\psfrag{}{} %
%
\includegraphics[height=5.9cm]{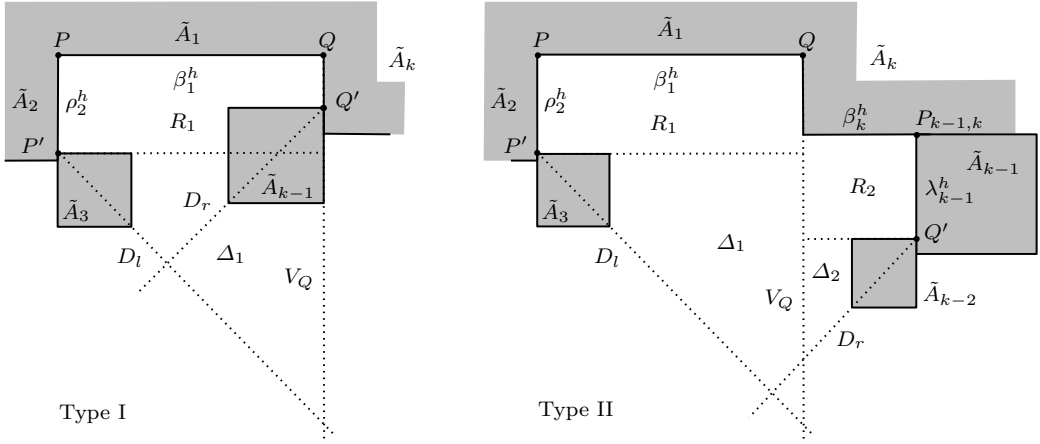}
\caption{Holes of Type~I and~II with their left and right
diagonals.} \label{fig:typeIandtypeII}
\end{figure}

If $\hat{H}_h$ is of Type~I, then we consider the rectangle, $R_1$,
of area $\rho_2^h \cdot \beta_1^h$ induced by the points $P$, $P'$,
and $Q$. Moreover, let $\Delta_1$ be the triangle below $R_1$ formed
by the bottom side of $R_1$, $D_l^h$, and the vertical line, $V_Q$,
passing through $Q$; see Fig.~\ref{fig:typeIandtypeII}. We obtain:

\begin{lemma}\label{cor:sizecatI}
Let $\hat{H}_h$ be a hole of Type~I. Then,
$$
|\hat{H}_h| \leq (\beta_1^h)^2 + \frac12 (\rho_2^h)^2\; .
$$
%\begin{list}{}{}
%\item[(i) ] $|\hat{H}_h| \leq |R_1| + |\Delta| \leq 0.75 \cdot (\lambda_1^\hat{H}_h a_1)^2  + \frac{1}{2}
%(\lambda_2^\hat{H}_h a_2)^2$ or
%\item[(ii)] $|\hat{H}_h| \leq |R_2| + |\Delta| \leq \frac{3}{4} (\lambda_1^\hat{H}_h a_1)^2  + \frac{1}{2}
%(\lambda_k^\hat{H}_h a_k)^2$
%\end{list}

\end{lemma}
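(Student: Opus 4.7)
The plan is to enclose $\hat{H}_h$ in the region $R_1 \cup \Delta_1$, compute its area directly, and then obtain the asymmetric form in the statement by a single application of AM-GM. I set up coordinates with $P = (0,0)$ so that $Q = (\beta_1^h, 0)$ lies on $B_{\SqLoch_1}$ and $P' = (0, -\rho_2^h)$ lies on $R_{\SqLoch_2}$; under this convention $D_l^h$ is the line $y = -\rho_2^h - x$, the rectangle $R_1$ equals $[0,\beta_1^h] \times [-\rho_2^h, 0]$, and $\Delta_1$ is the right triangle with legs of length $\beta_1^h$ sitting just below $R_1$ with hypotenuse on $D_l^h$.

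The core step will be the containment
\[
\hat{H}_h \;\subseteq\; \bigl\{(x,y) : 0 \leq x \leq \beta_1^h,\; -\rho_2^h - x \leq y \leq 0 \bigr\} \;=\; R_1 \cup \Delta_1.
\]
The upper and left bounds $y \leq 0$ and $x \geq 0$ are immediate from $\overline{PQ} \subseteq B_{\SqLoch_1}$ and $\overline{PP'} \subseteq R_{\SqLoch_2}$. The lower-left bound $y \geq -\rho_2^h - x$ follows from the splitting construction of Section~\ref{sect:modify}: the splits were performed precisely so that $\partial \hat{H}_h$ does not cross $D_l^h$, and since $P'$ lies simultaneously on $D_l^h$ and on $\partial \hat{H}_h$, by connectedness the entire hole lies on the upper side of $D_l^h$. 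The right-hand bound $x \leq \beta_1^h$ uses the Type~I hypothesis $\SqLoch_k \in N_R(\SqLoch_{k-1})$: the portion of $\partial \hat{H}_h$ between $Q$ and $Q' = P_{k-1,k}$ is exactly the vertical segment along $L_{\SqLoch_k}$ at $x = \beta_1^h$, and the squares encountered counterclockwise past $Q'$ all belong to the left sequence ${\cal L}_{\SqLoch_k}$, hence lie strictly to the left of $\SqLoch_k$ and keep $\partial \hat{H}_h$ in the half-plane $\{x \leq \beta_1^h\}$.

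Once the containment is in place, a direct integration yields
\[
|R_1| + |\Delta_1| \;=\; \int_0^{\beta_1^h} (\rho_2^h + x)\,dx \;=\; \beta_1^h \rho_2^h + \tfrac{1}{2}(\beta_1^h)^2,
\]
so $|\hat{H}_h| \leq \beta_1^h \rho_2^h + \tfrac{1}{2}(\beta_1^h)^2$. The AM-GM inequality $\beta_1^h \rho_2^h \leq \tfrac{1}{2}(\beta_1^h)^2 + \tfrac{1}{2}(\rho_2^h)^2$ then gives, after adding $\tfrac{1}{2}(\beta_1^h)^2$ to both sides, the stated bound
\[
|\hat{H}_h| \;\leq\; (\beta_1^h)^2 + \tfrac{1}{2}(\rho_2^h)^2.
\]
The main obstacle I anticipate is the careful verification of the right-hand constraint $x \leq \beta_1^h$: while $L_{\SqLoch_k}$ pins down the uppermost portion of the right boundary, controlling the lower-right staircase below $Q'$ requires an inductive argument on ${\cal L}_{\SqLoch_k}$ essentially parallel to the one used in the proof of Lemma~\ref{lem:dr}.
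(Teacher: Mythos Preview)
Your approach is essentially the same as the paper's: enclose $\hat{H}_h$ in $R_1\cup\Delta_1$, compute $|R_1|+|\Delta_1|=\rho_2^h\beta_1^h+\tfrac12(\beta_1^h)^2$, and finish with AM--GM. The only place where you make extra work for yourself is the right-hand bound $x\le\beta_1^h$: rather than reproving it via an induction on ${\cal L}_{\SqLoch_k}$, you can simply invoke Lemma~\ref{lem:dr}, since $D_r^h$ starts at $Q'\in L_{\SqLoch_k}$ (which lies on $V_Q$) and has slope $+1$, so $D_r^h$ stays weakly left of $V_Q$ and hence $\hat{H}_h\subseteq\{x\le\beta_1^h\}$. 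Your parenthetical claim that the boundary squares past $Q'$ ``all belong to the left sequence ${\cal L}_{\SqLoch_k}$'' is not literally true and is not needed once you cite Lemma~\ref{lem:dr}.
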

\begin{proof}
Obviously, $|\hat{H}_h|  \leq  |R_1| + |\Delta_1|$. As $D^h_l$ has
slope $-1$, we get $|\Delta_1| = \frac12 (\beta_1^h)^2$. Moreover,
we have $|R_1| = \rho_2^h \cdot \beta_1^h \leq \frac12(\rho_2^h)^2 +
\frac12 (\beta_1^h)^2$. Altogether, we get the stated bound.
\end{proof}

Thus, we charge the bottom side\footnote{The charge to the bottom of
$\SqLoch_1$ can be reduced to $\frac 34$ by considering the larger
one of the rectangles, $R_1$ and the one induced by $Q$, $Q'$, and
$P$, as well as the triangle below the larger rectangle formed by
$D_l^h$ and $D_r^h$. However, this does not lead to a better
competitive ratio, because these costs are already dominated by the
cost for holes of Type~II.} of $\SqLoch_1$ with~$1$ and the right
side of $\SqLoch_2$ with~$\frac12$. In this case, we get
$c^\beta_{h,1}=1$ and $c^\rho_{h,2}=\frac{1}{2}$.

If $\hat{H}_h$ is of Type~II, we define $R_1$ and $\Delta_1$ in the
same way. In addition, $R_2$ is the rectangle of area
$\beta_k^h\cdot \lambda_{k-1}^h$ induced by the points $Q'$ and
$P_{k-1,k}$ as well as the part of $B_{\SqLoch_k}$ that belongs to
$\partial \hat{H}_h$. Let $\Delta_2$ be the triangle below $R_2$,
induced by the bottom side of $R_2$, $D_r^h$, and $V_Q$. Using
similar arguments as in the preceding lemma, we get:

\begin{corollary}\label{cor:sizecatII}
Let $\hat{H}_h$ be a hole of Type~II.
%; $|R_1|$, $|\Delta_1|$,
%$|R_2|$, and $|\Delta_2|$ defined as above.
Then,
$$|\hat{H}_h|
%\leq |R_1| + |\Delta_1| + |R_2| + |\Delta_2|
\leq (\beta_1^h)^2 + (\beta_k^h)^2 + \frac12 (\rho_2^h)^2 + \frac12
(\lambda_{k-1}^h)^2\; .
$$
%(\lambda_{B_{\SqLoch_1}}^{\hat{H}_h} \cdot
%a_1)^2 + \frac{1}{2} (\lambda_{R_{\SqLoch_2}}^{\hat{H}_h} \cdot a_2)^2 +
%(\lambda_{B_{\SqLoch_k}}^{\hat{H}_h} \cdot a_k)^2 + \frac{1}{2}
%(\lambda_{L_{\SqLoch_{k-1}}}^{\hat{H}_h} \cdot a_{k-1})^2$
%holds.
\end{corollary}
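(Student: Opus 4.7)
My plan is to mirror the proof of Lemma~\ref{cor:sizecatI}, applying the same rectangle-plus-triangle dissection symmetrically on both the left and the right end of the hole. In a Type~II hole, the square $\SqLoch_{k}$ sits on top of $\SqLoch_{k-1}$, so the right side of $\partial \hat{H}_h$ has the same ``step'' structure near $\SqLoch_{k-1},\SqLoch_{k}$ that the left side has near $\SqLoch_{1},\SqLoch_{2}$. This is precisely what motivates the additional pieces $R_2$ and $\Delta_2$, and the proof reduces to four area estimates.

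First, I would verify that $\hat{H}_h$ is contained in the polygon bounded by $D_l^h$, the portion of $\partial \hat{H}_h$ running clockwise from $P'$ to $Q'$ (over the top of the hole), and $D_r^h$. After the hole-splitting of Section~\ref{sect:modify}, $D_l^h$ no longer meets $\partial \hat{H}_h$ in its interior, while Lemma~\ref{lem:dr} gives $D_r^h \cap \hat{H}_h^\circ = \emptyset$; these two facts together with the definition of $P'$ and $Q'$ yield the enclosure. I would then split the enclosing polygon along the vertical line $V_Q$, putting $R_1$ and $\Delta_1$ to the left of $V_Q$ and $R_2$ and $\Delta_2$ to the right.

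Second, I would apply the calculation of Lemma~\ref{cor:sizecatI} verbatim to the left portion, obtaining $|R_1| + |\Delta_1| \leq (\beta_1^h)^2 + \tfrac12(\rho_2^h)^2$. For the right portion I would first observe that $Q' = P_{k-2,k-1}$ lies on $L_{\SqLoch_{k-1}}$ exactly $\lambda_{k-1}^h$ below $P_{k-1,k}$, so $Q'$ is the lower-right corner of $R_2$. Hence $|R_2| = \beta_k^h \cdot \lambda_{k-1}^h \leq \tfrac12(\beta_k^h)^2 + \tfrac12(\lambda_{k-1}^h)^2$ by AM--GM. Because $D_r^h$ has slope $+1$ and $Q'$ sits at horizontal distance $\beta_k^h$ from $V_Q$, the triangle $\Delta_2$ is a right triangle with two legs of length $\beta_k^h$, giving $|\Delta_2| = \tfrac12 (\beta_k^h)^2$. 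Summing all four pieces yields the claimed bound.

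The main obstacle I expect is justifying the enclosure cleanly, in particular handling degenerate configurations where $V_Q$ falls outside the horizontal extent of $\Delta_1$ or $\Delta_2$, or where $D_l^h$ and $D_r^h$ cross above the level of the $\Delta_i$'s. Because the statement is only an upper bound, however, overlaps or small extensions of the covering regions are harmless: it suffices to show $\hat{H}_h \subseteq R_1 \cup \Delta_1 \cup R_2 \cup \Delta_2$, which follows from the same polygon-enclosure reasoning used in Type~I together with the symmetric treatment at the right end.
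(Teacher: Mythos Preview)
Your proposal is correct and follows exactly the route the paper intends: the paper itself gives no detailed argument for this corollary beyond ``using similar arguments as in the preceding lemma,'' and your dissection into $R_1,\Delta_1,R_2,\Delta_2$ together with the AM--GM bound on $|R_2|$ and the computation $|\Delta_2|=\tfrac12(\beta_k^h)^2$ is precisely what those ``similar arguments'' amount to. The one point worth noting is that your claim that $Q'$ lies at horizontal distance $\beta_k^h$ from $V_Q$ relies on $Q=P_{k,1}$ having the same $x$-coordinate as the lower-left corner of $\SqLoch_k$; this holds because $\overline{PQ}\subset B_{\SqLoch_1}$ forces $Q$ to sit at $x=l_{\SqLoch_k}$ whether $\SqLoch_k$ is a bottom or a right neighbor of $\SqLoch_1$, so the identification is safe.
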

We obtain the charges $c^\beta_{h,1}=1$, $c^\rho_{h,2}=\frac{1}{2}$,
$c^\beta_{h,k}=1$ and $c^\lambda_{h,k-1}=\frac{1}{2}$. Thus, we have
a maximum total charge of $2$ (bottom:~1, left:~1/2, and right:~1/2)
for a square, so far.

\paragraph{Holes with a Virtual Lid\label{sect:virtual}}
Next we consider a hole, $\hat{H}_h$, with a virtual lid. Let
$\hat{H}_g$ be the hole immediately above $\hat{H}_h$, {\em i.e.,}
$\hat{H}_h$ was created by removing the diagonal-boundary
intersections in $\hat{H}_g$. Corresponding to
Lemma~\ref{lem:caseAcaseB_1}, let $\SqLoch_{\mathrm{up}}$ be the
square whose copy becomes a new lid, while $\SqLoch'_{\mathrm{up}}$
is the copy. The bottom neighbor of $\SqLoch_{\mathrm{up}}$ is
denoted by $\SqLoch_{\mathrm{low}}$. We show that
$\SqLoch'_{\mathrm{up}}$ increases the total charge of
$\SqLoch_{\mathrm{up}}$ not above $2.5$. Recall that $\hat{H}_h$ is
a hole of Type~I or II by Corollary~\ref{cor:hstar}.

% by at most $\frac{1}{2}$.

\begin{figure}[t]
\psfrag{Hgh}{$\hat{H}_g$} %
\psfrag{Hhh}{$\hat{H}_h$} %
\psfrag{Aupt}{$\tilde{A}_{\mathrm{up}}$} %
\psfrag{X}{$\tilde{A}_{\mathrm{low}}$} %
\psfrag{Aqt}{$\tilde{A}_q$} %
\psfrag{Aqm1t}{$\tilde{A}_{q-1}$} %
\psfrag{E}{$E$} %
\psfrag{F}{$F$} %
\psfrag{Pprime}{$P'$} %
\psfrag{P}{$P$} %
\psfrag{N}{$N$} %
\psfrag{Dlg}{$D_l^g$} %
\psfrag{VN}{$V_N$} %
\psfrag{R1}{$R_1$} %
\psfrag{r}{$\rho_{\mathrm{low}}^h$} %
\psfrag{b}{$\beta_{\mathrm{up}}^h$} %
\centering
\includegraphics[height=6.5cm]{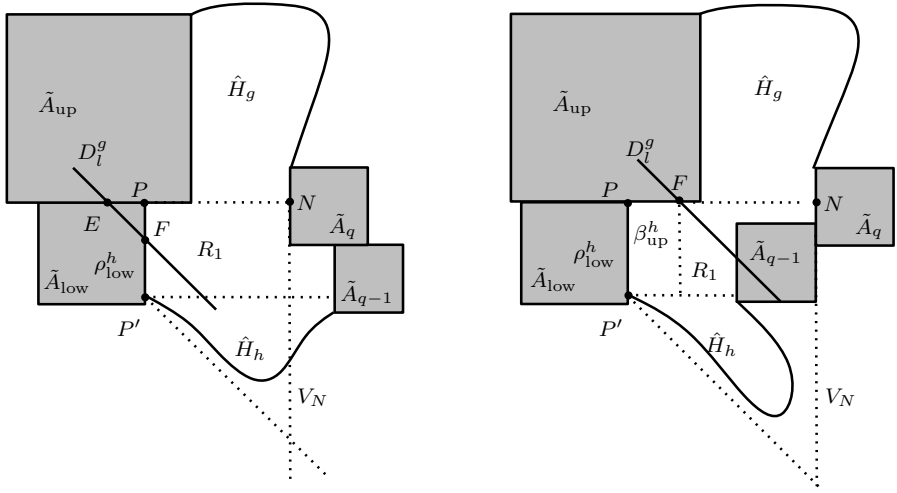}
\caption{The holes $\hat{H}_g$ and $\hat{H}_h$ and the rectangle
$R_1$ which is divided into two parts by $D_l^g$. The upper part is
already included in the bound for $\hat{H}_g$. The lower part is
charged completely to $R_{\SqLoch_\mathrm{low}}$ and
$B_{\SqLoch_\mathrm{up}'}$. Here $P$ and $P'$ are defined w.r.t.\
$\hat{H}_h$.} \label{fig:virtuallid}
\end{figure}

%As $\hat{H}_h$ is of Category I or II, we could bound its area in the
%same way as we did for holes with non-virtual lids. This implies a
%total charge of three for a square $\SqLoch_i$. But we can do better by
%observing that a part of the newly created hole is already included
%in the bound of the original hole and that a part of the bottom of
%the square which copy becomes the new lid is uncharged.

%In the situation where $\hat{H}_h$ is created there are two squares
%$\SqLoch_{\mathrm{up}}$ and $\SqLoch_{\mathrm{low}}$ and $D_r^h$ either
%intersects $\SqLoch_{\mathrm{up}}$ at the bottom side or
%$\SqLoch_{\mathrm{low}}$ at the right side. $\SqLoch'_{\mathrm{up}}$ the copy
%of $\SqLoch_{\mathrm{up}}$ becomes the new lid. \fi

%\vspace{0.5cm}

%Zeichnung \ref{fig:copiedlid} anschauen. Die Bezeichnung $R_1$ und
%$R_2$ sind nicht richtig. $R_1$ is da gesamte Rechteck. Wie bei
%original holes. Der Teil über $D_l$ ist durch die Schranke für \hat{H}
%abgedeckt. P ist in beiden Zeichnungen identisch mit $P_{i-1,i}$

%\vspace{0.5cm}

If $\SqLoch_{\mathrm{up}}$ does not exceed $\SqLoch_{\mathrm{low}}$
to the left, it cannot serve as a lid for any other hole; see
Fig.~\ref{fig:virtuallid}. Hence, the charge of the bottom side of
$\SqLoch_{\mathrm{up}}$ is zero; by Corollary~\ref{cor:hstar},
Lemma~\ref{cor:sizecatI}, and Corollary~\ref{cor:sizecatII} we
obtain a charge of at most $1$ to the bottom side of
$\SqLoch'_{\mathrm{up}}$. Thus, we get a total charge of $1$ to
$\SqLoch_{\mathrm{up}}$. For an easier summation of the charges at
the end, we transfer the charge from the bottom side of
$\SqLoch'_{\mathrm{up}}$ to the bottom side of
$\SqLoch_{\mathrm{up}}$.

If it exceeds $\SqLoch_{\mathrm{low}}$ to the left, we know that the
part $B_{\SqLoch_{\mathrm{up}}} \cap T_{\SqLoch_{\mathrm{low}}}$ of
$B_{\SqLoch_{\mathrm{up}}}$ is not charged by any other hole,
because it does not belong to the boundary of a hole, and the lid is
defined uniquely.

We define points, $P$ and $P'$, for $\hat{H}_h$ in the same way as
in the preceding section. Independent of $\hat{H}_h$'s type,
$\SqLoch'_{\mathrm{up}}$ would get charged only for the rectangle
$R_1$ induced by $P$, $P'$, and $N$, as well as for the triangle
below $R_1$ if we would use Lemma~\ref{cor:sizecatI} and
Corollary~\ref{cor:sizecatII}.

Next we show that we do not have to charge $\SqLoch'_{\mathrm{up}}$
for $R_1$ at all, because the part of $R_1$ that is above $D_l^g$ is
already included in the bound for $\hat{H}_g$, and the remaining
part can be charged to $B_{\SqLoch_{\mathrm{up}}}$ and
$R_{\SqLoch_{\mathrm{low}}}$. $\SqLoch'_{\mathrm{up}}$ will get
charged only $\frac 12$ for the triangle.

$D_l^g$ splits $R_1$ into a part that is above this line and a part
that is below this line. The latter part of $R_1$ is not included in
the bound for $\hat{H}_g$. Let $F$ be the intersection of $\partial
\hat{H}_g$ and $D_l^g$ that caused the creation of $\hat{H}_h$. If
$F \in R_{\SqLoch_{\mathrm{low}}}$, then this part is at most $\frac
12 (\rho_{\mathrm{low}}^h)^2$, where $\rho_{\mathrm{low}}^h$ is the
length of $\overline{P'F}$.
%$0.5 ( \lambda^\hat{H}_{\SqLoch_{low}} \cdot a_{\mathrm{low}})^2$
We charge $\frac 12$ to $R_{\SqLoch_{\mathrm{low}}}$. If $F \in
B_{\SqLoch_{\mathrm{up}}}$, then the part of $R_1$ below $D_l^g$ can
be split into a rectangular part of area $\rho_{\mathrm{low}}^h
\cdot \beta_{\mathrm{up}}^h$,
% $\lambda^\hat{H}_{R_{\SqLoch_{\mathrm{low}}}} \SqLoch_{\mathrm{low}}$
% $\lambda^\hat{H}_{B_{\SqLoch_{\mathrm{up}}}} \SqLoch_{\mathrm{up}}$
% where $\lambda^\hat{H}_{B_{\SqLoch_{\mathrm{up}}}}
% \SqLoch_{\mathrm{up}}$ is the part of $B_{\SqLoch_{\mathrm{up}}}$ that overlaps\TODO{???}
% $\SqLoch_{\mathrm{low}}$ to the right,
and a triangular part of area $\frac 12 (\rho_{\mathrm{low}}^h)^2$;
see Fig.~\ref{fig:virtuallid}. Here $\beta_{\mathrm{up}}^h$ is the
length of $\overline{PF}$.
%$0.5(\lambda^\hat{H}_{R_{\SqLoch_{\mathrm{low}}}} a_{\mathrm{low}})^2$.
The cost of the triangular part is charged to
$R_{\SqLoch_{\mathrm{low}}}$. Note that $B_{\SqLoch_\mathrm{up}}$
exceeds $\SqLoch_{\mathrm{low}}$ to the left and to the right and
that the part that exceeds $\SqLoch_{\mathrm{low}}$ to the right is
not charged. Moreover,
% $\lambda^\hat{H}_{B_{\SqLoch_{\mathrm{up}}}} \SqLoch_{\mathrm{up}}$
$\rho_{\mathrm{low}}^h$
% $\lambda^\hat{H}_{R_{\SqLoch_{\mathrm{low}}}} \SqLoch_{\mathrm{low}}$
is not larger than $B_{\SqLoch_{\mathrm{up}}} \cap
T_{\SqLoch_{\mathrm{low}}}$, {\em i.e.,} the part of
$B_{\SqLoch_{\mathrm{up}}}$ that was not charged before. Therefore,
we can charge the rectangular part completely to
$B_{\SqLoch_{\mathrm{up}}}$. Hence, $\SqLoch'_{\mathrm{up}}$ is
charged $\frac 12$ for the triangle below $R_1$, and
$\SqLoch_{\mathrm{up}}$ is charged at most $2.5$ in total.

%\begin{figure}
%\centering
%\includegraphics[height=5cm]{copiedlid.eps}
%\caption{} \label{fig:copiedlid}
%\end{figure}

\paragraph{Holes Containing Parts of
{\boldmath $ \partial S$}\label{sect:strip}} So far we did not
consider holes whose boundary touches $\partial S$. We show in this
section that these holes are just special cases of the ones
discussed in the preceding sections.

%Let $\hat{H}_h$ be hole and $\partial S \cap \partial \hat{H}_h \neq \emptyset$.
Because the top side of a square never gets charged for a hole, it
does not matter whether a part of $B_S$ belongs to the boundary.
Moreover, for any hole, $\hat{H}_h$, either $L_S$ or $R_S$ can be a
part of $\partial \hat{H}_h$, because otherwise there exits a curve
with one endpoint on $L_S$ and the other endpoint on $R_S$, with the
property that this curve lies completely inside of $\hat{H}_h$. This
contradicts the existence of the bottom sequence of a square lying
above the curve.

For a hole $\hat{H}_h$ with $L_S$ contributing to $\partial
\hat{H}_h$, we can use the same arguments as in the proof for
Lemma~\ref{lem:directionoftraversal} to show that $L_S \cap \partial
\hat{H}_h$ is a single line segment. Let $P$ be the topmost point of
this line segment and $\SqLoch_1$ be the square containing $P$.
%The existence of ${\cal B}_{\SqLoch_1}$ implies that $\SqLoch_1$ is the
%lid of $\hat{H}_h$.
$\SqLoch_1$ must have a bottom neighbor, $\SqLoch_k$, and
$\SqLoch_k$ must have a left neighbor, $\SqLoch_{k-1}$, we get
$P_{k,1}\in B_{\SqLoch_1}$ and $P_{k-1,k} \in L_{\SqLoch_k}$,
respectively. We define the right diagonal, $D_r$, and the point
$Q'$ as above and conclude that $\hat{H}_h$ lies completely inside
the polygon formed by $L_S \cap
\partial \hat{H}_h$, $D_r$, and the part of $\partial \hat{H}_h$ that
is between $P$ and $Q'$ in clockwise order. We split this polygon
into a rectangle and a triangle in order to obtain charges of $1$ to
$B_{\SqLoch_1}$ and $\frac{1}{2}$ to $L_{\SqLoch_k}$.

Now, consider a hole where a part of $R_S$ belongs to $\partial
\hat{H}_h$. We denote the topmost point on $R_S \cap \partial
\hat{H}_h$ by $Q$, and the square containing $Q$ by $\SqLoch_1$.
%$\SqLoch_1$ is the lid of this hole.
We number the squares in counterclockwise order and define the left
diagonal, $D_l$, as above. Now we consider the intersections of
$D_l$ and eliminate them by creating new holes. After this, the
modified hole $\hat{H}_h^{(z)}$ can be viewed as a hole of Type~II,
for which the part on the right side of $V_Q$ has been cut off;
compare Corollary~\ref{cor:sizecatII}. We obtain charges of $1$ to
$B_{\SqLoch_1}$ and $\frac{1}{2}$ to $R_{\SqLoch_2}$. For the copy
of a square we get a charge of $\frac{1}{2}$ to the bottom side.

\subsection{Summing up the Charges}\label{sect:summingupcharges}

Altogether, we have the charges from Table~\ref{tab:charges}. The
charges depend on the type of the adjacent hole (Type~I,~II,
touching or not touching the strip's boundary), but the maximal
charge dominates the other ones. Moreover, the square may also serve
as a virtual lid. The maximal charges from a hole with non-virtual
lid and those from a hole with virtual lid sum up to a total charge
of $2.5$ per square. This proves our claim from the beginning:
\begin{equation*}
\sum_{h=1}^s |H_h| \leq 2.5 \cdot \sum_{i=1}^{n+1} a_{i}^2 \, .
\end{equation*}
%as claimed at the beginning. Thus, the total height of a packing
%produced by BottomLeft is at most $3.5$ times the packing of an
%optimal strategy.

%\begin{theorem}
%{\em BottomLeft} is (asymptotically) 3.5-competitive.
%\end{theorem}

\begin{table}[t]
\scriptsize
\begin{center}
\begin{tabular}{|l||*5{c|}|*4{c|}|c||} \hline
& \multicolumn{5}{c||}{Non-virtual Lid} &
\multicolumn{4}{c||}{Virtual Lid} & Total
\\ \hline
& Type~I & Type~II & $L_S$ & $R_S$ & Max.\ & Type~I & Type~II &
$R_S$ & Max.\ &
\\ \hline
\rule[-7pt]{0pt}{20pt}Left Side    &         0 & $\frac12$ & $\frac12$ &         0 & $\frac12$  &         0 &         0 &         0 &         0 & $\frac12$ \\ \hline %
\rule[-7pt]{0pt}{20pt}Bottom Side  &         1 &         1 &         1 &         1 & 1          & $\frac12$ & $\frac12$ & $\frac12$ & $\frac12$ & 1.5       \\ \hline %
\rule[-7pt]{0pt}{20pt}Right Side   & $\frac12$ & $\frac12$ &         0 & $\frac12$ & $\frac12$  &         0 &         0 &         0 &         0 & $\frac12$ \\ \hline\hline %
\rule[-7pt]{0pt}{20pt}Total &&&&& 2 &&&& $\frac12$ & 2.5\\ \hline
\end{tabular}
\end{center}
\caption{The charges to the different sides of a single square.
Summing up the charges to the different sides, we conclude that
every square gets a total charge of at most 2.5.
\label{tab:charges}}
\end{table}

%%%%%%%%%%%%%%%%%%%%%%%%%%%%%%%%%%%%%%%%%%%%%%%%%%%%%%%%%%%%%%%%%%%%%%
\section{The Strategy {\em SlotAlgorithm}}\label{sect:stratSlotAlg}
%%%%%%%%%%%%%%%%%%%%%%%%%%%%%%%%%%%%%%%%%%%%%%%%%%%%%%%%%%%%%%%%%%%%%%

In this section we analyze a different strategy for the strip
packing problem with Tetris and gravity constraint. This strategy
provides more structure on the generated packing, which allows us to
prove an upper bound of $2.6154$ on the asymptotic competitive
ratio.

\subsection{The Algorithm}

Consider two vertical lines of infinite length going upwards from
the bottom side of $S$ and parallel to the left and the right side
of $S$. We call the area between these lines a {\em slot}, the lines
the {\em left boundary} and the {\em right boundary} of the slot,
and the distance between the lines the {\em width} of the slot.

Now, our strategy {\em SlotAlgorithm}\index{SlotAlgorithm} works as
follows: We divide the strip $S$ of width $1$ into slots of
different widths; for every $j=0,1,2\ldots$, we create $2^j$ slots
of width $2^{-j}$ side by side, {\em i.e.,} we divide $S$ into one
slot of width 1, two slots of width $1/2$, four slots of width
$1/4$, and so on. Note that a slot of width $2^{-j}$ contains 2
slots of width $2^{-j-1}$; see Fig.~\ref{fig:shadows}.

For every square $A_i$, we round the side length $a_i$ to the
smallest number $2^{-k_i}$ that is larger than or equal to $a_i$.
Among all slots of with $2^{-k_i}$, we place $A_i$ in the one that
%We place $A_i$ in the slot of width $2^{-k_i}$ that
allows $A_i$ to be placed as near to the bottom of $S$ as possible,
by moving $A_i$ down along the left boundary of the chosen slot
until another square is reached.
%We call this algorithm {\em SlotAlgorithm}.
The algorithm clearly satisfies the Tetris and the gravity
constraints, and next we show that the produced height is at most
2.6154 times the height of an optimal packing.

\begin{figure}[t]
\psfrag{A1}{$A_1$} %
\psfrag{A1S}{$A_1^S$} %
\psfrag{A2}{$A_2$} %
\psfrag{A2S}{$A_2^S$} %
\psfrag{A2W}{$A_2^W$} %
\psfrag{A3}{$A_3$} %
\psfrag{A3S}{$A_3^S$} %
\psfrag{A3W}{$A_3^W$} %
\psfrag{T2}{$T_2$} %
\psfrag{T3}{$T_3$} %
\psfrag{Q}{$Q$} %
\psfrag{R}{$R$} %
\psfrag{P}{$P$} %
\psfrag{delta2}{$\delta_2$} %
\psfrag{delta2prime}{$\delta_2'$} %
\psfrag{delta3prime}{$\delta_3'$} %
\psfrag{a3mdelta3prime}{$a_3\!-\!\delta_3'$} %
\psfrag{2hk1m2}{$2^{-k_1+2}$} %
\psfrag{2hk1m1}{$2^{-k_1+1}$} %
\psfrag{2hk1}{$2^{-k_1}$} %
\centering
\includegraphics[height=6.5cm]{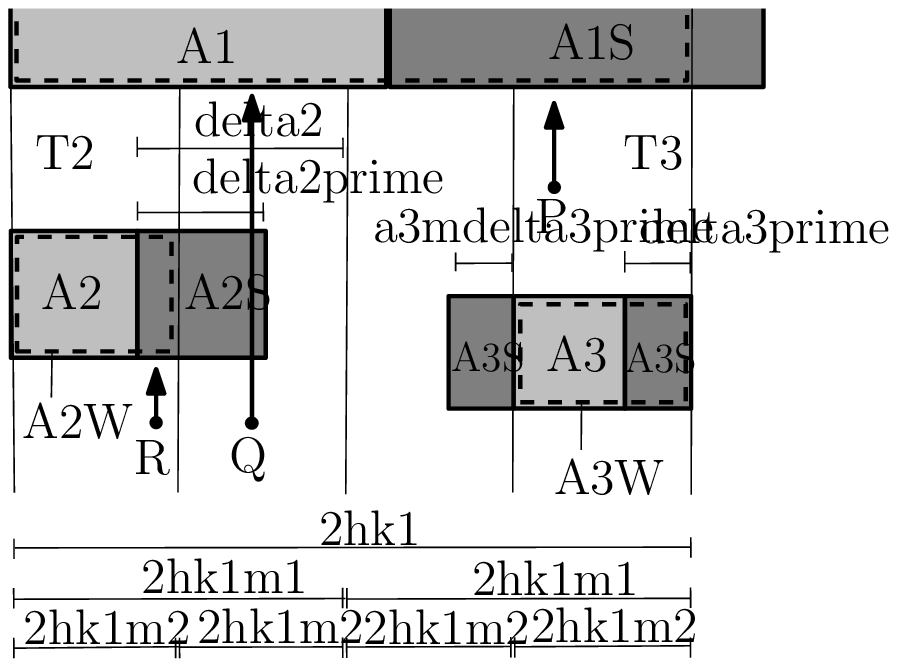}
\caption{Squares $A_i$, $i=1,2,3$, with their {\em shadows} $A_i^S$
and their {\em widening} $A_i^W$. $\delta_2'$ is equal to $a_2$ and
$\delta_3'$ is equal to $\delta_3$. The points $P$ and $Q$ are
charged to $A_1$. $R$ is not charged to $A_1$, but to $A_2$.}
\label{fig:shadows}
\end{figure}

\subsection{Analysis}

Let $A_i$ be a square placed by the SlotAlgorithm in a slot $T_i$ of
width $2^{-k_i}$. If $a_i\leq \frac 12$, we define $\delta_i$ as the
distance between the right side of $A_i$ and the right boundary of
the slot of width $2^{-k_i+1}$ that contains $A_i$, and we define
$\delta_i'=\min\{a_i, \delta_i\}$. We call the area obtained by
enlarging $A_i$ by $\delta_i'$ to the right and by $a_i - \delta_i'$
to the left (without $A_i$ itself)
%% flipping $A_{i}$ at its right side
the {\em shadow}\index{shadow} of $A_{i}$ and denote it by
$A_{i}^S$. Thus, $A_{i}^S$ is an area of the same size as $A_{i}$
and lies completely inside a slot of twice the width of $A_{i}$'s
slot. If $a_i\geq \frac 12$, we enlarge $A_i$ only to the right side
and call this area the shadow. Moreover, we define the {\em
widening}\index{widening} of $A_i$ as $A_{i}^W = (A_{i} \cup A_{i}^S
) \cap T_i$; see Fig.~\ref{fig:shadows}.

Now, consider a point $P$ in $S$ that is not inside an $A_{j}^W$ for
any square $A_{j}$.
% and that does not lie on the left or the right boundary of any slot.
We charge $P$ to the square, $A_{i}$, if $A^W_i$ is the first
widening that intersects the vertical line going upwards from $P$.
We denote by $F_{A_{i}}$ the set of all points charged to $A_{i}$
and by $|F_{A_{i}}|$ its area. For points lying on the left or the
right boundary of a slot, we break ties arbitrarily. For the
analysis, we place a closing square, $A_{n+1}$, of side length~1 on
top of the packing. Therefore, every point in the packing that does
not lie inside an $A_{j}^W$
% and that is not located on the left or the right boundary of a slot
is charged to a square. Because $A_i$ and $A_i^S$ have the same
area, we can bound the height %, $ALG$,
of the packing produced by the SlotAlgorithm by
\begin{equation*}
2 \cdot \sum_{i=1}^{n} a_i^2 + \sum_{i=1}^{n+1} |F_{A_i}| \;.
\end{equation*}

\begin{theorem}\label{fks-osp-main-theo}
The SlotAlgorithm is (asymptotically) 2.6154-competitive.
\end{theorem}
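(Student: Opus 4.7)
The starting point is the inequality derived right before the theorem statement:
\begin{equation*}
H \;\leq\; 2 \sum_{i=1}^{n} a_i^2 \;+\; \sum_{i=1}^{n+1} |F_{A_i}|,
\end{equation*}
where $H$ is the resulting height and $A_{n+1}$ is the closing unit square. Because any packing has height at least $\sum_{i=1}^{n} a_i^2$ and $a_{n+1}=1$, the theorem reduces to establishing
\begin{equation*}
\sum_{i=1}^{n+1} |F_{A_i}| \;\leq\; \tfrac{8}{13}\sum_{i=1}^{n} a_i^2 + O(1);
\end{equation*}
combined with the previous inequality this yields $H \leq (34/13)\cdot OPT + O(1)$, which is the asserted asymptotic bound.

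The first step is to establish the structural properties of the regions $F_{A_i}$. Since $a_i > 2^{-k_i-1}$, the square $A_i$ together with its shadow spans a width of at least $2^{-k_i}$ inside the enclosing slot of width $2^{-k_i+1}$, so the widening $A_i^W$ fills the full width of the slot $T_i$. It follows that $F_{A_i}$ is contained in the column $T_i \times [0,b_{A_i}]$ and equals this column minus the widenings of all squares placed in $T_i$ or in any sub-slot of $T_i$ below it. I would then verify that the widenings involved are pairwise vertically disjoint: once a widening spans the full width of a slot, the placement rule (lower the square along the left boundary of its slot) forbids any later square placed in a sub-slot from overlapping it vertically.

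The second step is an amortized charging scheme that distributes the total waste $\sum |F_{A_i}|$ over the squares, so that each square receives a total charge of at most $\tfrac{8}{13}\cdot a_i^2$. The key geometric observation is that any waste in the column below $A_i^W$ arises because the supporting square at the left boundary of $T_i$ lies in a proper sub-slot, leaving a strip in the opposite sub-slot uncovered; the height of this strip can be bounded in terms of the side length of the next square above the strip in that opposite sub-slot. Propagating such charges recursively up the slot hierarchy, and examining the extremal case in which $a_i$ is only slightly larger than $2^{-k_i-1}$ (so that rounding up the slot width wastes almost half the slot), one verifies that the worst-case per-square total charge is exactly $\tfrac{8}{13}$.

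The hardest part will be calibrating the charging scheme so that the constant $\tfrac{8}{13}$ is both attainable and not exceeded. A pointwise bound $|F_{A_i}| \leq c\cdot a_i^2$ does not hold in general, since the empty space below a single large square can be produced by many smaller squares packed into its sub-slots; hence the argument must genuinely amortize across the slot hierarchy, and care must be taken that no unit of waste is charged twice. Once such a scheme is exhibited, the competitive ratio $34/13$ follows by a one-line substitution into the displayed inequalities above.
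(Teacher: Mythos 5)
Your reduction to bounding $\sum_i |F_{A_i}|$ is the same as the paper's, but the core of your argument contains a wrong turn that leaves the proof with no actual derivation of the constant $8/13$. You assert that a pointwise bound $|F_{A_i}|\le c\cdot a_i^2$ ``does not hold in general'' and that one must therefore amortize across the slot hierarchy. This is exactly backwards: the paper proves the pointwise bound $|F_{A_i}|\le \frac{8}{13}\,a_i^2$ for every square, and your justification for rejecting it confuses the total empty space below $A_i$ with the set $F_{A_i}$. By definition a point belongs to $F_{A_i}$ only if $A_i^W$ is the \emph{first} widening met by the upward ray from that point; empty space created by smaller squares in sub-slots below $A_i$ lies under \emph{their} widenings and is charged to them, not to $A_i$. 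So the region you worry about is already excluded from $F_{A_i}$.

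Having discarded the correct route, you replace it with an ``amortized charging scheme'' whose existence and calibration you merely assert (``one verifies that the worst-case per-square total charge is exactly $8/13$''). The entire difficulty of the theorem lives precisely there, and the number $8/13=\frac12\sum_{j\ge 0}(3/16)^j$ does not fall out of a generic charging argument: it comes from a specific recursive construction. The paper builds, below $\tilde A_1=A_i$, a sequence $\tilde A_2,\tilde A_3,\dots$ where $\tilde A_{j+1}$ is the \emph{smallest} square meeting the extension $E_j$ of the bottom of $\tilde A_j$ in an active sub-slot; it shows (Claim A) that such a square exists and satisfies $\tilde a_{j+1}\le 2^{-k_j-1}$ (otherwise $\tilde A_j$ could have been placed lower, or a wide square would deactivate the slot), (Claim B) that the number of active slots grows by a factor $2^{k_j-k_{j-1}}-1$ per level, and (Claim C) that the waste per active slot between $E_j$ and $E_{j+1}$ is at most $2^{-k_j}\tilde a_{j+1}-2\tilde a_{j+1}^2$. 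Maximizing the resulting sum at $k_{i+1}=k_i+2$ gives the worst-case ratio $3$ (slot multiplication) against $16$ (area shrinkage), hence the geometric series with ratio $3/16$ and the bound $8/13$ at $\tilde a_1=2^{-k_1-1}+\varepsilon$. None of this structure appears in your proposal, so the step from your setup to the claimed constant is a genuine gap, not a calibration detail.
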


\begin{proof}
The height of an optimal packing is at least $\sum_{i=1}^{n} a_i^2$,
and therefore, it suffices to show that $|F_{A_i}| \leq 0.6154 \cdot
a_i^2$ holds for every square $A_i$. We construct for every $A_i$ a
sequence of squares $\SqLoch_1^i, \SqLoch_2^i, \ldots, \SqLoch_m^i$
with $\SqLoch_1^i= A_i$ (to ease notation, we omit the superscript
$i$ in the following). We denote by $E_{j}$ the extension of the
bottom side of $\SqLoch_j$ to the left and to the right; see
Fig.~\ref{fig:startinduction}.

We will show that by an appropriate choice of the sequence, we can
bound the area of the part of $F_{\SqLoch_1}$ that lies between a
consecutive pair of extensions, $E_j$ and $E_{j+1}$, in terms of
$\SqLoch_{j+1}$ and the slot width. From this we will derive the
upper bound on the area of $F_{\SqLoch_1}$. We assume throughout the
proof that the square $\SqLoch_j$, $j\geq 1$, is placed in a slot,
$T_j$, of width $2^{-k_j}$. Note that $F_{\SqLoch_1}$ is completely
contained in $T_1$.

A slot is called {\em active (with respect to $E_j$ and
$\SqLoch_1$)} if there is a point in the slot that lies below $E_j$
and that is charged to $\SqLoch_1$ and {\em nonactive} otherwise. If
it is clear from the context we leave out the $\SqLoch_1$.

The sequence of squares is chosen as follows: $\SqLoch_1$ is the
first square and the next square, $\SqLoch_{j+1}$, $j=1,\ldots,m-1$,
is chosen as the smallest one that intersects or touches $E_j$ in an
active slot (w.r.t.\ $E_j$ and $\SqLoch_1$) of width $2^{-k_j}$ and
that is not equal to $\SqLoch_j$; see Fig.~\ref{fig:startinduction}.
The sequence ends if all slots are nonactive w.r.t.\ to
an extension $E_m$. We prove each of the following claims by induction: \\

\noindent{\bf Claim A:} $\SqLoch_{j+1}$ exists for $j+1 \leq m$ and
$\SqLochklein_{j+1} \leq 2^{-k_j-1}$ for $j+1\leq m$, {\em i.e.,}
the sequence exists and its elements have decreasing side length. \\

\noindent{\bf Claim B:}   The number of active slots (w.r.t.~$E_j$)
of width $2^{-k_j}$ is at most
\begin{center}
$
\begin{array}{cl}
1 & \mbox{, for } j=1 \mbox{ and } \\
\prod_{i=2}^j (\frac{1}{2^{k_{i-1}}} 2^{k_i} -1 ) &  \mbox{, for } j
\geq 2 \; .
\end{array}
$
\end{center}

%\[
%\begin{cases}
%\hphantom{----} 1 & \mbox{, for } j=1 \\
%                 \prod_{i=2}^j (\frac{1}{2^{k_{i-1}}} 2^{k_i} -1 ) &
%\mbox{, for } j \geq 2
%\end{cases}
%\]

\noindent{\bf Claim C:} The area of the part of $F_{\SqLoch_1}$ that
lies in an active slot of width $2^{-k_j}$ between $E_j$ and
$E_{j+1}$ is at most ${2^{-k_j}}
\SqLochklein_{j+1} - 2 \SqLochklein_{j+1}^2$. \\

\begin{figure}
\psfrag{A1t}{$\tilde{A}_1$} %
\psfrag{A2t}{$\tilde{A}_2$} %
\psfrag{A3t}{$\tilde{A}_3$} %
\psfrag{T2}{$T_2$} %
\psfrag{T3}{$T_3$} %
\psfrag{E1}{$E_1$} %
\psfrag{E2}{$E_2$} %
\psfrag{E3}{$E_3$} %
\psfrag{X}{$\leq 2^{-k_2} \cdot \tilde{a}_3 - 2 \tilde{a}_3^2$}
\psfrag{2hmk1}{$2^{-k_1}$} %
\psfrag{2hmk2}{$2^{-k_2}$} %
\psfrag{2hmk3}{$2^{-k_3}$} %
\centering
\includegraphics[height=6.3cm]{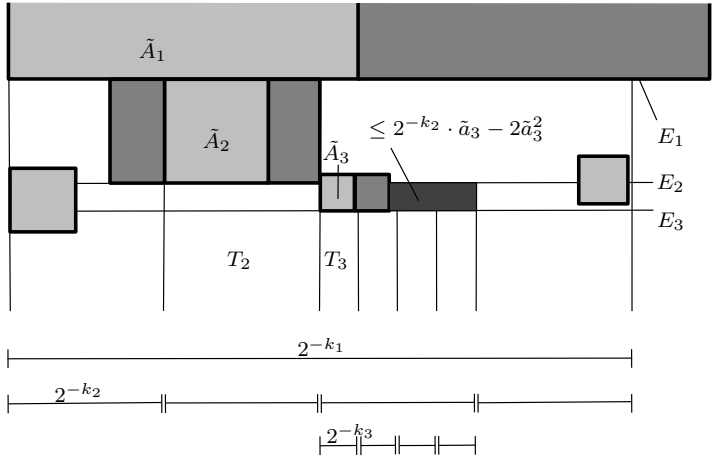} %5.8
\caption{The first three squares of the sequence (light gray) with
their shadows (gray). In this example, $\SqLoch_2$ is the smallest
square that bounds $\SqLoch_1$ from below. $\SqLoch_3$ is the
smallest one that intersects $E_2$ in an active slot (w.r.t.\ $E_2$)
of width $2^{-k_2}$. There has to be an intersection of $E_2$ and
some square in every active slot because, otherwise, there would
have been a better position for $\SqLoch_2$. $T_2$ is nonactive,
(w.r.t.\ $E_2$) and of course, also w.r.t.\ all extension $E_j$,
$j\geq 3$. The part of $F_{\SqLoch_1}$ that lies between $E_1$ and
$E_2$ has size $2^{-k_1} \SqLochklein_{2} - 2 \SqLochklein_{2}^2$.}
\label{fig:startinduction}
\end{figure}

\noindent{\bf Proof of Claim A:} If $\SqLoch_{1}$ is placed on the
bottom of $S$, $F_{\SqLoch_1}$ has size 0 and $\SqLoch_1$ is the
last element of the sequence. Otherwise, the square $\SqLoch_1$ has
at least one bottom neighbor, which is a candidate for the choice of
$\SqLoch_2$.

Now suppose for a contradiction that there is no candidate for the
choice of the $(j+1)$th element.
%Does not exist for $j+1 \leq m$.
Let $T'$ be an active slot in $T_1$ (w.r.t.\ $E_j$) of width
$2^{-k_j}$ where $E_j$ is not intersected by a square in $T'$. If
there is an $\varepsilon$ such that for every point, $P\in ( T' \cap
E_j )$, there is a point, $P'$, at a distance $\varepsilon$ below
$P$ which is charged to $\SqLoch_1$, we conclude that there would
have been a better position for $\SqLoch_{j}$. Hence, there is at
least one point, $Q$, below $E_j$ that is not charged to
$\SqLoch_1$; see Fig~\ref{fig:existence}. Consider the bottom
sequence (as defined in Section~\ref{sect:basicproperties}) of the
square $Q$ is charged to. This sequence must intersect $E_j$ outside
of $T'$ (by the choice of $T'$). This implies that one of its
elements must intersect the left or the right boundary of $T'$, and
we can conclude that this square has at least the width of $T'$.
This is because (by the algorithm) a square with rounded side length
$2^{-\ell}$ cannot cross a slot's boundary of width larger than
$2^{-\ell}$. In turn, a square with rounded side length larger than
the width of $T'$ completely covers $T'$, and $T'$ cannot be active
w.r.t.\ to $E_j$ and $\SqLoch_1$. Thus, all points in $T'$ below
$E_j$ are charged to this square; a contradiction. This proves that
there is a candidate for the choice of $\SqLoch_{j+1}$.

\smallskip

Suppose $\SqLochklein_{2} > 2^{-k_1-1}$. Then, $\SqLoch_2$ was
placed in a slot of width at least $2^{-k_1}$. Thus, its widening
has width at least $2^{-k_1}$, and $\SqLoch_2$ is a bottom neighbor
of $\SqLoch_1$. Then, no point in $T_1$, below $E_1$, is charged to
$\SqLoch_1$, and hence, $T_1$ is nonactive w.r.t.\ $E_1$ and
$\SqLoch_1$. This implies, that $\SqLoch_2$ does not belong to the
sequence; a contradiction.

%then $\SqLoch_2$ is a bottom neighbor of $\SqLoch_1$ of side length
%greater than or equal to $\SqLoch_1$ and, thus, all points below are
%charged to $\SqLoch_2$. Hence, slot $T_1$ is nonactive and
%$F_{\SqLoch_1}$ is of size zero.

Because we chose $\SqLoch_{j+1}$ to be of minimal side length,
$\SqLochklein_{j+1} \geq 2^{-k_j}$ would imply that all slots inside
$T$ are nonactive (w.r.t.\ $E_j$). Therefore, if $\SqLoch_{j+1}$
belongs to the sequence,
%is not the last element of the sequence,
$\SqLochklein_{j+1} \leq 2^{-k_j-1}$ must hold. \\

\begin{figure}
\psfrag{A1t}{$\tilde{A_1}$} %
\psfrag{Ah}{$\hat{A}$} %
\psfrag{Ajt}{$\tilde{A_j}$} %
\psfrag{2hmkj}{$2^{-k_j}$} %
\psfrag{Q}{$Q$} %
\psfrag{Tprime}{$T'$} %
\psfrag{bs}{${\cal{B}}_{\hat{A}}$} % ${\cal{B}}_{\hat{A}}$
\psfrag{eps}{$\varepsilon$}
\psfrag{Ej}{$E_j$}
\centering
\includegraphics[height=6.4cm]{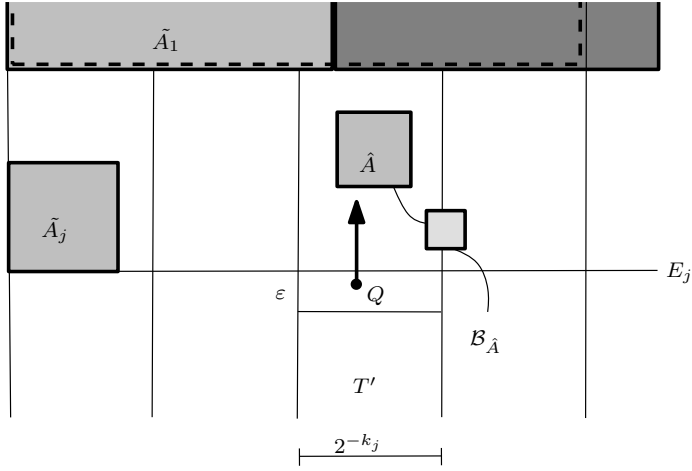}
\caption{If $E_j$ is not intersected in an active slot of size
$2^{-k_j}$ we obtain a contradiction: Either there is a position for
$A_j$ that is closer to the bottom of $S$ or there is a square that
makes $E_j$ nonactive. $\hat{A}$ is the square $Q$ is charged to,
${\cal{B}}_{\hat{A}}$ its bottom sequence.
%The part of $F_{\SqLoch_1}$ (darkest gray) that
%lies between $E_j$ and $E_{j+1}$ in an active
%slot of width $\frac{1}{2^{k_j}}$ is at most $\frac{1}{2^{k_j}}
%\SqLochklein_{j+1} - 2\SqLochklein_{j+1}^2$ because points in
%$\SqLoch^W_{j+1}$ are not charged to $\SqLoch_1$.
} \label{fig:existence}
\end{figure}

\noindent{\bf Proof of Claim B:} Obviously, there is at most one
active slot of width $2^{-k_1}$; see Fig.~\ref{fig:startinduction}.
By the induction hypothesis, there are at most
\begin{equation*}
(\frac{1}{2^{k_1}} 2^{k_2} -1 ) \cdot (\frac{1}{2^{k_2}} 2^{k_3} -1
)\cdot \ldots \cdot(\frac{1}{2^{k_{j-2}}} 2^{k_{j-1}} -1 )
\end{equation*}
active slots of width $2^{-k_{j-1}}$ (w.r.t.\ $E_{j-1}$). Each of
these slots contains $2^{k_{j}-k_{j-1}}$ slots of width $2^{-k_j}$,
and in every active slot of width $2^{-k_{j-1}}$ at least one slot
of width $2^{-k_j}$ is nonactive because we chose $\SqLoch_j$ to be
of minimum side length. Hence, the number of active slots (w.r.t.\
$E_j$) is a factor of $(\frac{1}{2^{k_{j-1}}} 2^{k_{j}} -1 )$ times
larger than the number of active slots (w.r.t.\
$E_{j-1}$).  \\

\noindent{\bf Proof of Claim C:} The area of the part of
$F_{\SqLoch_1}$ that lies between $E_1$ and $E_2$ is at most
$2^{-k_1} \SqLochklein_{2} - 2 \SqLochklein_{2}^2$; see
Fig.~\ref{fig:startinduction}. Note that we can subtract the area of
$\SqLoch_2$ twice, because $\SqLoch_2^S$ was defined to lie
completely inside a slot of width $2^{-k_2+1}\leq 2^{-k_1}$ and is
of same area as $\SqLoch_2$.

By the choice of $\SqLoch_{j+1}$ and because in every active slot of
width $2^{-k_j}$ there is at least one square that intersects $E_j$
(points below the widening of this square are not charged to
$\SqLoch_1$) we conclude that the area of $F_{\SqLoch_1}$ between
$E_j$ and $E_{j+1}$ is at most $2^{-k_j} \SqLochklein_{j+1} - 2
\SqLochklein_{j+1}^2$, in every active slot of width $2^{-k_j}$.
\\

{\bf Altogether}, we proved that the sequence is well defined and we
calculated an upper bound on the number of active slots and an upper
bound on the size of the part of $|F_{\SqLoch_1}|$ that is contained
in an active slot. Multiplying the number and the size yields an
upper bound on  $|F_{\SqLoch_1}|$ of
\begin{equation*}
|F_{\SqLoch_1}| \leq (\frac{\SqLochklein_{2}}{2^{k_1}}  -
2\SqLochklein_{2}^{2} ) \cdot 1 + \sum\limits_{j=2}^m \!\left(
\frac{\SqLochklein_{j+1}}{2^{k_j}} \!-\! 2
\SqLochklein_{j+1}^{2}\right) \prod\limits_{i=2}^{j}
\left(\frac{2^{k_{i}}}{2^{k_{i-1}}}\!-\!1 \right) \; .
\end{equation*}
This expression is maximized if we choose
$\SqLochklein_{i+1}=1/2^{k_i +2}$, for $i=1,\ldots,m$, {\em i.e.,}
$k_i=k_1+2(i-1)$.

\noindent We get:
\begin{eqnarray*}
|F_{\SqLoch_1}| & \leq & \frac{1}{2^{k_1+2}} \cdot \frac{1}{2^{k_1}}
- 2\cdot \left ( \frac{1}{2^{k_1+2}}\right )^2 \\
& & + \sum_{j=2}^{m} \left [ \frac{1}{2^{k_1+2(j-1)}} \cdot
\frac{1}{2^{k_1+2j}} - 2\left (\frac{1}{2^{k_1+2j}} \right )^2
\right ] \prod_{i=1}^{j-1} \left ( \frac{2^{k_1+2i}}{2^{k_1+2(i-1)}}
-1 \right )  \\
& = & \frac{1}{2^{k_1+3}} + \sum_{j=2}^{m} \left [
\frac{1}{2^{2k_1+4j-2}} - \frac{1}{2^{2k_1+4j-1}}\right ] \cdot
3^{j-1} \\
& = & \frac{1}{2^{k_1+3}} + \sum_{j=2}^{m}
\frac{3^{j-1}}{2^{2k_1+4j-1}} \\
& = & \frac{1}{2^{k_1+3}} + \sum_{j=1}^{m-1} \frac{3^{j}}{2^{2k_1+4j
+3}} \\
& \leq & \sum_{j=0}^{\infty} \frac{3^{j}}{2^{2k_1+4j+3}} \; .
\end{eqnarray*}
The fraction $|F_{\SqLoch_1}|/\SqLochklein_1^2$ is maximized, if we
choose $\SqLochklein_{1}$ as small as possible, {\em i.e.,}
$\SqLochklein_{1}=2^{-k_1-1}+\varepsilon$. We conclude:
\begin{equation*}
\frac{|F_{\SqLoch_1}|}{\SqLochklein_{1}^2} \leq \sum_{j=0}^{\infty}
\frac{2^{2k_1+2} \cdot 3^j}{2^{2k_1+4j+3}} = \sum_{j=0}^{\infty}
\frac{3^j}{2^{4j+1}} = \frac{1}{2} \cdot \sum_{j=0}^{\infty}
\left(\frac{3}{16}\right)^j = \frac{8}{13} = 0.6153...
\end{equation*}
Thus,
\begin{equation*}
|F_{\SqOrig_i}|
%|F_{\SqLoch_1}| \leq 0.6154 \cdot \SqLochklein_{i}
\leq 0.6154 \cdot a_{i}^2 \; .
\end{equation*}
\end{proof}

\section{Lower Bounds}

The lower bound construction for online strip packing introduced by
Galambos and Frenk~\cite{gf-spllb-93} and later improved by van
Vliet~\cite{v-ilbob-92} relies on an integer programming formulation
and its LP-relaxation for a specific bin packing instance. This
formulation does not take into account that there has to be a
collision free path to the final position of the item. Hence, it
does not carry over to our setting.

\begin{figure}
\psfrag{54}{$\frac 54$} %
\psfrag{1}{$1$} %
\psfrag{14}{$\frac 14$} %
\psfrag{12}{$\frac 12$} %
\psfrag{12e}{$\frac 12+\varepsilon$} %
\psfrag{34e}{$\frac 34+\varepsilon$} %
\psfrag{alg}{any algorithm} %
\psfrag{theopt}{the optimum} %
\psfrag{t1}{Type I} %
\psfrag{t2}{Type II} %
\centering
\includegraphics[height=7.6cm]{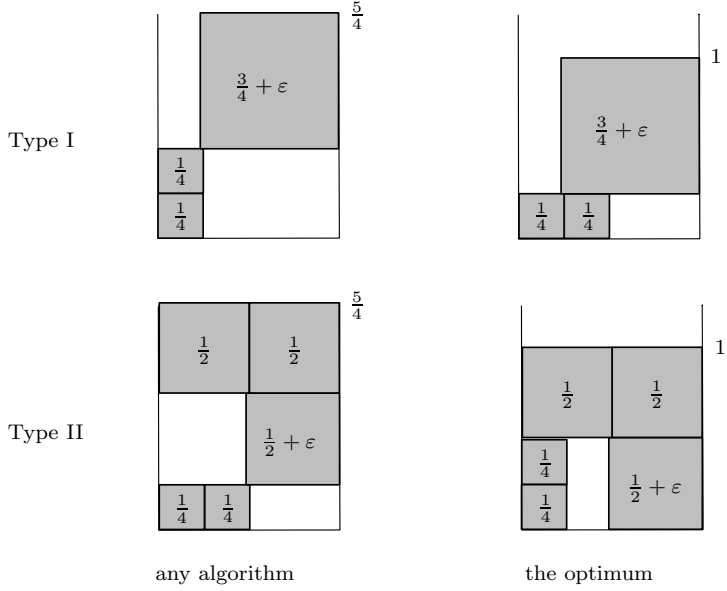}
\caption{The left column shows possible packings of any algorithm
for one iteration. The right column contains optimal packings. The
top row displays the first and the bottom row the second type of
sequence. } \label{fig:lowerbound}
\end{figure}

The best asymptotic lower bound, we are aware of, is $\frac 54$. It
is based on two sequences which are repeated iteratively. We denote
by $A_i^k$, $k=1,\ldots,5$ and $i=1,2,\ldots$, the $k$-th square of
the sequence in the $i$-th iteration, and we denote by $H_i$,
$i=1,2,\ldots$, the height of the packing after the $i$-th
iteration; we define $H_0=0$.

The first two squares of each sequence have a side length of $\frac
14$, that is, $a_i^1=a_i^2=\frac14$.
%Consider the square that determines the height of the current
%packing.
Now, depending on the choice of the algorithm, the sequence
continues with one of the following two possibilities (see
Fig.~\ref{fig:lowerbound}):

Type~I: If the algorithms packs the first two squares on top of each
other, with the bottom side of the lower square at height $H_{i-1}$,
the sequence continues with a square of side length $\frac 34 +
\varepsilon$, {\em i.e.,} $a_i^3= \frac 34 + \varepsilon$ and
$a_i^4=a_i^5=0$ (upper left picture in Fig.~\ref{fig:lowerbound}).

Type~II: Otherwise,
%the algorithms packs these two items next to each other
the sequence continues with a square of side length $\frac 12 +
\varepsilon$ and two squares of side length $\frac 12$, {\em i.e.,}
$a_i^3= \frac 12 + \varepsilon$ and $a_i^4=a_i^5=\frac 12$ (lower
left picture in Fig.~\ref{fig:lowerbound}).

\begin{lemma}\label{lem:lowerbound}
The height of the packing produced by any algorithm increases in
each iteration, on average, by at least $\frac 54$.
\end{lemma}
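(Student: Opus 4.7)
The plan is to show that, irrespective of the algorithm's strategy, each iteration forces a height increase of at least $\frac{5}{4}+\varepsilon$, from which the claimed average bound follows by summation. Since the adversary selects between Type~I and Type~II sequences adaptively based on the algorithm's placement of the first two $\frac{1}{4}$-squares, each case would be argued separately.

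First, I would establish a structural invariant by induction on $i$: no square arriving in iteration $i$ (each of width at least $\frac{1}{4}$) can descend below $H_{i-1}$. The inductive step uses the observation that each iteration's topmost squares (the $\frac{3}{4}+\varepsilon$-square in Type~I, or the two $\frac{1}{2}$-squares placed on top in Type~II) cap the holes created below them, and any residual corridor has width at most $\frac{1}{4}-\varepsilon$ or $\frac{1}{2}-\varepsilon$, which is too narrow to admit any square from the adversary's sequence. Consequently, each iteration reduces to a fresh packing problem in a clear strip of width~$1$ sitting atop $H_{i-1}$.

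For Type~I, the argument is direct: the stacked $\frac{1}{4}$-squares form a column of width $\frac{1}{4}$ and height $\frac{1}{2}$ starting at $H_{i-1}$, and since $\frac{3}{4}+\varepsilon+\frac{1}{4}>1$, the $\frac{3}{4}+\varepsilon$-square cannot fit beside this column, so its bottom lies at $y\geq H_{i-1}+\frac{1}{2}$, giving $H_i\geq H_{i-1}+\frac{5}{4}+\varepsilon$. For Type~II, I would first observe that the two $\frac{1}{4}$-squares must be side by side at height $H_{i-1}$ (possibly with a gap of width at most $\frac{1}{2}$), since any other arrangement is either Type~I by definition or violates gravity. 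The key combinatorial fact is that a $\frac{1}{2}+\varepsilon$-square and any $\frac{1}{2}$-square cannot share a horizontal strip, because their widths sum to more than~$1$. Hence the $\frac{1}{2}+\varepsilon$-square has its vertical extent $(\frac{1}{2}+\varepsilon)$ to itself; since the $\frac{1}{4}$-blockers force its bottom to $y\geq H_{i-1}+\frac{1}{4}$ (no width-$\frac{1}{2}+\varepsilon$ region fits at height $H_{i-1}$), and the remaining column beside it has width only $\frac{1}{2}-\varepsilon$, neither $\frac{1}{2}$-square fits next to the $\frac{1}{2}+\varepsilon$; thus both $\frac{1}{2}$-squares sit above it, yielding $H_i\geq H_{i-1}+\tfrac{1}{4}+(\tfrac{1}{2}+\varepsilon)+\tfrac{1}{2}=H_{i-1}+\tfrac{5}{4}+\varepsilon$.

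The main obstacle will be the Type~II case analysis: one must verify that every conceivable placement of the $\frac{1}{2}+\varepsilon$-square, including non-leftmost or non-lowest positions intended to leave room underneath for a $\frac{1}{2}$-square beside the $\frac{1}{4}$-blockers, is still defeated by the width constraints (any such attempt either produces a collision with the $\frac{1}{2}+\varepsilon$ or merely postpones the forced vertical extension, yielding height $\geq\frac{3}{2}+\varepsilon$ instead). Once the case analysis and the sealing invariant are both in hand, summing over $n$ iterations and letting $\varepsilon\to 0$ produces the asymptotic bound $H_n\geq \frac{5}{4}n$.
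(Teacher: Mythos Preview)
Your proof has a genuine gap: the ``sealing invariant'' is false. You claim that after each iteration the residual corridor beside the topmost squares has width at most $\frac14-\varepsilon$ or $\frac12-\varepsilon$, too narrow for any square in the adversary's sequence. But in a Type~II iteration the algorithm is free to \emph{stack} the two $\frac12$-squares $A_{i-1}^4,A_{i-1}^5$ on top of $A_{i-1}^3$ rather than place them side by side. The stack has width exactly $\frac12$, so the corridor beside it has width exactly $\frac12$ (not $\frac12-\varepsilon$), and the $\frac14$-squares of iteration~$i$ can descend into it, far below $H_{i-1}$. The paper's own proof explicitly acknowledges this possibility (``the first two squares might be placed at the same height or even deeper as the previous squares'').

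Once the invariant fails, your per-iteration bound $H_i\ge H_{i-1}+\frac54$ fails too. Concretely, if the algorithm stacks $A_{i-1}^4,A_{i-1}^5$ and then drops $A_i^1,A_i^2$ into the resulting $\frac12$-wide well, the $\frac12+\varepsilon$-square $A_i^3$ still cannot fit anywhere below $H_{i-1}$ and must sit on top; the two $\frac12$-squares of iteration~$i$ can then be placed side by side above it, giving an increase of only $1+\varepsilon$ in iteration~$i$. The paper rescues the bound by amortizing over the pair of iterations: the stacking choice forced $H_{i-1}-H_{i-2}\ge\frac32+\varepsilon$, and the current iteration contributes at least $1$, so the two-iteration average is still at least $\frac54$. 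This amortization is precisely the missing idea in your approach; the lemma really is a statement about the \emph{average} increase, not a uniform per-iteration increase. Your Type~I argument and the ``flat-top'' subcase of Type~II are fine and match the paper, but you need to replace the sealing invariant by the paper's case analysis on the previous iteration.
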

\begin{proof}
Consider the $i$-th iteration, $i=1,2,\ldots$. If the sequence is of
Type~I, the statement is obviously true because the square of side
length $\frac 34 + \varepsilon$ cannot pass any of the squares of
side length $\frac 14$ which are packed on top of each other; see
Fig.~\ref{fig:lowerbound}.

If the sequence is of Type~II, we need to consider the previous
iteration. If there was no previous iteration, then we know that
$A_i^1$ and $A_i^2$ are both placed on the bottom side of the strip.
Because $A_i^3$ cannot be placed on the bottom side, and $A_i^4$ and
$A_i^5$ cannot pass $A_i^3$, we get an increase of at least $\frac
54$.

If the sequence in the previous iteration was of Type~I, $H_{i-1}$
is determined by the square of side length $\frac 34 + \varepsilon$.
Hence, $A_i^1$ and $A_i^2$ are both placed on top of this square and
the same arguments hold.

If the sequence in the previous iteration was of Type~II, then
either $A_{i-1}^4$ and $A_{i-1}^5$ are packed next to each other or
on top of each other. In the first case, we can use the same
arguments as in the case where there was no previous iteration. In
the second case, $A_{i-1}^4$ and $A_{i-1}^5$ are placed an top of
each other {\em and} on top of $A_{i-1}^3$, because they cannot
%Now, consider the second case. If the top sides of the two squares
%of side length $\frac 12$ form a horizontal line the claim holds as
%well; see Fig.~\ref{fig:lowerbound}.
%
%Otherwise, the two squares of side length $\frac 12$ were packed on
%top of each other because none of them could
pass a square with side length $\frac 12 + \varepsilon$. This
implies that, the last iteration contributed a height of at least
$\frac 32$ to the height of the packing. No matter how the algorithm
packs the squares from the current iteration (the first two squares
might be placed at the same height or even deeper as the previous
squares) it contributes a height of at least $1$ to the packing.
This proves an average increase of $\frac 54$ for both iterations.
\end{proof}

\begin{theorem}
There is no algorithm with asymptotic competitive ratio smaller than
$\frac 54$ for the online strip packing problem with Tetris and
gravity constraint.
\end{theorem}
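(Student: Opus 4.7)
The plan is to derive the theorem from Lemma~\ref{lem:lowerbound} by pairing the per-iteration lower bound for any algorithm with a matching per-iteration upper bound for the adversary's optimal packing, and then letting the number of iterations tend to infinity.

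Concretely, I would fix a large integer $N$ and feed the online algorithm the adversarial input consisting of $N$ consecutive iterations of the Type~I / Type~II construction described just before Lemma~\ref{lem:lowerbound}. Let $H_N$ be the height produced by the algorithm. Lemma~\ref{lem:lowerbound} already gives the lower bound side: by summing the average per-iteration increase, one obtains $H_N \ge \tfrac{5}{4} N - c_1$ for some absolute constant $c_1$ absorbing the ``first iteration'' boundary effect. This is the hard half, but it is done.

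For the upper bound on $OPT$, I would construct an explicit feasible packing (respecting gravity and Tetris) for each single iteration into a horizontal slab of height $1+\varepsilon$, and then stack the $N$ slabs. For a Type~I iteration, place the $(\tfrac34+\varepsilon)$-square at the bottom of the slab, and place the two $\tfrac14$-squares side by side (total width $\tfrac12$) on the strip floor of the next layer, using height $\tfrac14$; the slab consumes height $1+\varepsilon$. For a Type~II iteration, place the two $\tfrac12$-squares next to each other on the slab floor (height $\tfrac12$), then the $(\tfrac12+\varepsilon)$-square on top of the left one and the two $\tfrac14$-squares stacked in the remaining $\tfrac12\!\times\!\tfrac12$ pocket on top of the right one; this also fits in a slab of height $1+\varepsilon$. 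Since slabs are stacked from the bottom, each successive slab sits on a flat top, so the Tetris and gravity constraints are respected across slabs as well. This yields $OPT \le (1+\varepsilon) N$.

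Combining the two bounds, for the input of length $5N$,
\begin{equation*}
\frac{\mathrm{ALG}}{OPT}\;\ge\;\frac{\tfrac{5}{4}N - c_1}{(1+\varepsilon)N}
\;\xrightarrow[N\to\infty]{}\;\frac{5/4}{1+\varepsilon},
\end{equation*}
and since $\varepsilon>0$ is arbitrary, no algorithm can have asymptotic competitive ratio strictly smaller than $\tfrac54$, which is the claim. The only nontrivial step is Lemma~\ref{lem:lowerbound}, which is already proved; the rest is a routine slab-by-slab optimum construction and a limit.
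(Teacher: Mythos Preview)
Your approach is essentially the paper's: invoke Lemma~\ref{lem:lowerbound} for the algorithm side and exhibit a per-iteration packing of height about~$1$ for the optimum side, then let the number of iterations grow. The paper compresses the optimum side into a single sentence plus a figure; you spell it out and make the $\varepsilon\to 0$ step explicit (which is in fact needed, since the optimum per iteration is $1+\varepsilon$, not exactly~$1$).

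Two presentational points are worth tightening. First, your claim that ``each successive slab sits on a flat top'' is not literally true: in your Type~I slab the profile is $1+\varepsilon$ only over $x\in[0,\tfrac12]$, then $\tfrac34+\varepsilon$ over $x\in[\tfrac12,\tfrac34+\varepsilon]$, then $0$; your Type~II top is similarly jagged, and the ``$\tfrac12\times\tfrac12$ pocket'' is really $(\tfrac12-\varepsilon)\times\tfrac12$. This does not break the argument---the next slab's bottom squares are still supported on a nontrivial segment, and if $OPT$ is the usual offline strip-packing optimum you do not even need support---but the sentence should be rephrased. Second, your explicit placements reverse the arrival order (big square first in Type~I, the two $\tfrac12$-squares first in Type~II). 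If $OPT$ must respect arrival order with Tetris/gravity, these particular packings are not realizable; however, order-respecting packings of the same height $1+\varepsilon$ do exist (e.g., for Type~I put the two $\tfrac14$-squares side by side on the floor and the $(\tfrac34+\varepsilon)$-square on top), so your bound $OPT\le(1+\varepsilon)N$ stands either way.
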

\begin{proof}
The height of the packing produced by any algorithm increases by
$\frac 54$ per iteration for the above instance
(Lemma~\ref{lem:lowerbound}). The optimum can pack the squares
belonging to one iteration always such that the height of the
packing increases by at most $1$; see the right column of
Fig.~\ref{fig:lowerbound}.
\end{proof}

\section{Conclusion}
There are instances consisting only of squares for which the
algorithm of Azar and Epstein does not undercut its proven
competitive factor of $4$. Hence, this algorithm is tightly
analyzed. We proved competitive ratios of $3.5$ and $2.6154$ for
BottomLeft and the SlotAlgorithm, respectively. Hence, both
algorithms outperform the one by Azar and Epstein if the input
consists only of squares.

We do not know any instance for which BottomLeft
produces a packing that is $3.5$ times higher than an optimal
packing. The best lower bound we know is $\frac 54$.
%Consider a sequence of small items of total width $\frac 13$,
%followed by two items slightly larger than $\frac 13$. Taking turns
%with unit squares this yields a lower bound of $\frac 54$,
%asymptotically.

Moreover, we are not aware of an instance in which the SlotAlgorithm
reaches its upper bound of $2.6154$. The instance consisting of
squares with side length $2^{-k} + \delta$, for large $k$ and small
$\delta$, gives a lower bound of $2$ on the competitive ratio.

Hence, there is still room for improvement: Our analysis might be
improved or there may be more sophisticated algorithms for the
strip packing problem with Tetris and gravity constraint.

%One possible way to improve the analysis we gave for {\em
%BottomLeft}. We believe that the actual worst-case ratio of the
%total height and total area is not more than 3.

At this point, the bottleneck in our analysis for BottomLeft is the
case in which a square has large holes at the right, left, and bottom
side and also serves as a virtual lid; see
Fig.~\ref{fig:example_bn}. This worst case can happen to only a few
squares, but never to all of them. Thus, it might be possible to
transfer charges between squares, which may yield a refined
analysis. The same holds for the SlotAlgorithm and the sequence we
constructed to calculate the size of the unoccupied area below a
square.

In addition, it may be possible to apply better lower bounds on the
packing than just the total area, {\em e.g.,} the one arising from
{\em dual-feasible functions} by Fekete and
Schepers~\cite{fs-nclbb-98}.

\bibliographystyle{plain}

\bibliography{lit}

\end{document}